\spnewtheorem{observation}[theorem]{Observation}{\bfseries}{\itshape}
\spnewtheorem{myclaim}[theorem]{Claim}{\itshape}{\upshape}
\title{Linear-Time Recognition of Double-Threshold Graphs\thanks{%
Partially supported by
JSPS KAKENHI Grant Numbers
JP17K00017, 
JP18H04091, 
JP18H05291, 
JP18K11168, 
JP18K11169, 
JP20H05793, 
JP20H05795, 
JP20H05964, 
JP20K11670, 
JP20K11692, 
JP20K20417, 
JP21K11752, 
JP21K11757, 
and
JST CREST Grant Number JPMJCR1402. 
A preliminary version appeared in the proceedings of 
the 46th International Workshop on Graph-Theoretic Concepts in Computer Science (WG 2020),
Lecture Notes in Computer Science 12301 (2020) 286--297.
}}
\author{%
Yusuke Kobayashi \and 
Yoshio Okamoto \and 
Yota~Otachi$^{*}$\thanks{$*$ Corresponding author.} \and 
Yushi Uno
}
\authorrunning{%
Y. Kobayashi,
Y. Okamoto,
Y. Otachi, and
Y. Uno
}
\institute{
Y. Kobayashi \at
Kyoto University, Kyoto, Japan \\
\email{yusuke@kurims.kyoto-u.ac.jp}
\and
Y. Okamoto \at
The University of Electro-Communications, Chofu, Tokyo, Japan \\
\email{okamotoy@uec.ac.jp}
\and
Y. Otachi \at
Nagoya University, Nagoya, Japan \\
\email{otachi@nagoya-u.jp}
\and
Y. Uno \at
Osaka Prefecture University, Sakai, Osaka, Japan \\
\email{uno@cs.osakafu-u.ac.jp}
}
\let\doendproof\endproof
\renewcommand\endproof{~\hfill$\qed$\doendproof} 
\newcommand{\figdir}{./fig}
\newcommand{\figref}[1]{\figurename~\ref{#1}}
\definecolor{darkred}{rgb}{0.65,0,0}
\definecolor{darkblue}{rgb}{0,0,0.9}
\newcommand{\R}{\mathbb{R}}
\newcommand{\lb}{\mathtt{lb}}
\newcommand{\ub}{\mathtt{ub}}
\newcommand{\wei}{\mathtt{w}}
\newcommand{\xcor}{\mathtt{x}}
\date{Submitted on \today}
\begin{document}
\journalname{arXiv}

\maketitle

\begin{abstract}
A graph $G = (V,E)$ is a \emph{double-threshold graph}
if there exist a vertex-weight function $w \colon V \to \R$
and two real numbers $\lb, \ub \in \R$
such that $uv \in E$ if and only if $\lb \le \wei(u) + \wei(v) \le \ub$.
In the literature, those graphs are studied also as 
the pairwise compatibility graphs that have stars as their underlying trees.
We give a new characterization of double-threshold graphs
that relates them to bipartite permutation graphs.
Using the new characterization, we present a linear-time algorithm for recognizing double-threshold graphs.
Prior to our work, the fastest known algorithm by Xiao and Nagamochi~[Algorithmica 2020] ran in $O(n^{3} m)$ time,
where $n$ and $m$ are the numbers of vertices and edges, respectively.

\keywords{double-threshold graph, bipartite permutation graph, star pairwise compatibility graph}
\end{abstract}


\section{Introduction}
\label{sec:introduction}

A graph is a \emph{threshold graph} if there exist a vertex-weight function and a real number called a weight lower bound such that two vertices are adjacent in the graph
if and only if the associated vertex weight sum is at least the weight lower bound.
Threshold graphs and their generalizations are well studied because of their beautiful structures and applications in many areas~\cite{Golumbic04,MahadevP1995}.
In particular, the edge-intersections of two threshold graphs, and their complements 
(i.e., the union of two threshold graphs) have attracted several researchers in the past, and
recognition algorithms with running time 
$O(n^5)$ by Ma~\cite{Ma93},
$O(n^4)$ by Raschle and Simon \cite{RaschleS95}, and
$O(n^3)$ by Sterbini and Raschle~\cite{SterbiniR98}
have been developed,
where $n$ is the number of vertices.

In this paper, we study the class of double-threshold graphs, which is a proper generalization of threshold graphs
and a proper specialization of the graphs that are edge-intersections of two threshold graphs~\cite{JamisonS21}.
A graph is a \emph{double-threshold graph} if there exist a vertex-weight function and two real numbers called
weight lower and upper bounds such that two vertices are adjacent
if and only if the sum of their weights is at least the lower bound and at most the upper bound.
Our main result in this paper is a linear-time recognition algorithm for double-threshold graphs based on a new characterization.

As described below, there are at least two different lines of recent studies that led to this class of graphs:
one is on multithreshold graphs
and
the other is on pairwise compatibility graphs.

\paragraph{Multithreshold graphs.}
Jamison and Sprague~\cite{JamisonS20} introduced \emph{multithreshold graphs}
as a generalization of threshold graphs.
The \emph{threshold number} of a graph $G = (V,E)$ is the minimum positive integer $k$
such that there are $k$ distinct thresholds
$\theta_{1}, \dots, \theta_{k}$ and a weight function $\wei \colon V \to \R$ such that $uv \in E$ if and only if 
the number of thresholds $\theta_{i}$ satisfying $\theta_{i} \le \wei(u)+\wei(v)$ is odd.
Intuitively, the thresholds break the real line into ``yes'' and ``no'' regions
such that two vertices are adjacent if and only if the sum of their weights belongs to a yes region.
Clearly, a graph has threshold number $1$ if and only if it is a threshold graph
and has threshold number at most $2$ if and only if it is a double-threshold graph.
They showed that every graph has threshold number,
and asked some questions including the complexity for recognizing double-threshold graphs.
Puleo~\cite{Puleo20} showed that there is no single choice of three thresholds
that can represent all graphs of threshold number at most $3$.
Jamison and Sprague~\cite{JamisonS21} later focused on double-threshold graphs and showed that
all double-threshold graphs are permutation graphs and
that the bipartite double-threshold graphs are exactly the bipartite permutation graphs.
Our new characterization is closely related to these facts 
and our algorithm uses them.

\paragraph{Pairwise compatibility graphs.}
Motivated by uniform sampling from phylogenetic trees in bioinformatics, Kearney, Munro, and Phillips~\cite{KearneyMP03} defined pairwise compatibility graphs.
A graph $G=(V,E)$ is a \emph{pairwise compatibility graph}
if there exists a quadruple ($T$, $\wei$, $\lb$, $\ub$),
where $T$ is a tree, $\wei \colon E(T) \to \R$, and $\lb, \ub \in \R$,
such that the set of leaves in $T$ coincides with $V$ and 
$uv \in E$ if and only if 
the (weighted) distance $d_{T}(u, v)$ between $u$ and $v$ in $T$
satisfies $\lb \le d_{T}(u,v) \le \ub$. 
Since its introduction, several authors have studied properties of pairwise compatibility graphs, 
but the existence of a polynomial-time recognition algorithm for that graph class has been open.
The survey article by Calamoneri and Sinaimeri~\cite{CalamoneriS16} proposed to look at the class of pairwise compatibility graphs defined on stars
(i.e., star pairwise compatibility graphs), and asked for a characterization of star pairwise compatibility graphs.
As we will see later,
the star pairwise compatibility graphs are precisely the double-threshold graphs (see Observation~\ref{obs:dtg=spc}).

\paragraph{Polynomial-time recognition of double-threshold graphs.}
Xiao and Nagamochi~\cite{XiaoN20}
solved the open problem of Calamoneri and Sinaimeri~\cite{CalamoneriS16} 
by giving a vertex-ordering characterization and 
an $O(n^{3} m)$-time recognition algorithm for star pairwise compatibility graphs,
where $n$ an $m$ are the numbers of vertices and edges, respectively.
Their result also answered the question by Jamison and Sprague~\cite{JamisonS20} about the recognition of double-threshold graphs
by the equivalence of the graph classes.
In this paper, we further improve the running time to $O(m+n)$.

\paragraph{Other generalizations of threshold graphs.}
There are many other generalizations of threshold graphs
such as
bithreshold graphs~\cite{HammerM85},
threshold signed graphs~\cite{BenzakenHW85},
threshold tolerance graphs~\cite{MonmaRT88},
quasi-threshold graphs (also known as trivially perfect graphs)~\cite{YanCC96},
weakly threshold graphs~\cite{Barrus18},
paired threshold graphs~\cite{RavanmehrPBM18},  and
mock threshold graphs~\cite{BehrSZ18}.
We omit the definitions of these graph classes
and only note that 
some small graphs show that
these classes are incomparable to the class of double-threshold graphs
(e.g., 
$3K_{2}$ and the bull for bithreshold graphs,
$3K_{2}$ and the bull for threshold signed graphs,
$2K_{2}$ and the bull for threshold tolerance graphs,
$C_{4}$ and $2 K_{3}$ for quasi-threshold graphs,
$2K_{2}$ and the bull for weakly threshold graphs,
$C_{4}$ and the bull for paired threshold graphs,
$K_{3} \cup C_{4}$ and the bull for mock threshold graphs\footnote{%
The symbols $K_{n}$ and $C_{n}$ denote the complete graph and the cycle of $n$ vertices, respectively.
The disjoint union of two graphs $G$ and $H$ is denoted by $G \cup H$.
For a graph $G$ and a positive integer $k$, 
$kG$ is the disjoint union of $k$ copies of $G$.
The \emph{bull} is a five-vertex path with an additional edge connecting the 2nd and 4th vertices.
It is known that the bull is not a double-threshold graph~\cite{JamisonS21}.}).

Note that the concept of double-threshold \emph{digraphs}~\cite{HamburgerMPSX18}
is concerned with directed acyclic graphs defined from a generalization of semiorders involving two thresholds
and not related to threshold graphs or double-threshold graphs.

\paragraph{Organization of the paper.}
We first review in Section~\ref{sec:preliminaries} some known relationships between double-threshold graphs and permutation graphs,
and then show that connected bipartite permutation graphs admit representations with
some restrictions that we use in subsequent sections.
In Section~\ref{sec:characterization}, which is the main body of this paper,
we give a new characterization of double-threshold graphs.
Using the characterization, we present in Section~\ref{sec:algorithm} a simple linear-time 
algorithm for recognizing double-threshold graphs.

\paragraph{Graph classes.}
In \figurename~\ref{fig:graph-classes},
we summarize the inclusion relations among 
some of the graph classes mentioned so far.
We can see that the class of double-threshold graphs connects
several other graph classes studied before.
\begin{figure}[thb]
  \centering
  \includegraphics[scale=.7]{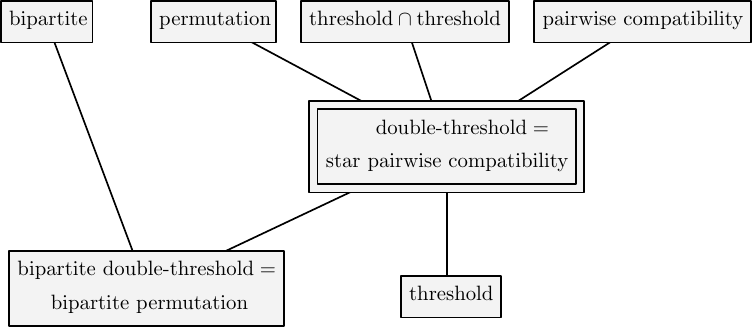}
  \caption{The hierarchy of graph classes mentioned in the introduction. 
      The class of the edge-intersections of two threshold graphs is abbreviated as $\text{threshold} \cap \text{threshold}$.
      A line segment between two classes indicates that 
      the one below is a subclass of the one above.}
  \label{fig:graph-classes}
\end{figure}


\section{Preliminaries}
\label{sec:preliminaries}

All graphs in this paper are undirected, simple, and finite.
A graph $G$ is given by the pair of its vertex set $V$ and its edge set $E$
as $G=(V,E)$.
The vertex set and the edge set of $G$ are often denoted by $V(G)$ and $E(G)$, respectively.
For a vertex $v$ in a graph $G = (V,E)$, its \emph{neighborhood} is the set of vertices that are adjacent to $v$, and denoted by $N_{G}(v) = \{ u \mid uv \in E\}$.
When the graph $G$ is clear from the context, we often omit the subscript.
A linear ordering $\prec$ on a set $S$ with $|S| = n$ can be represented by a sequence $\langle s_{1}, s_{2}, \dots, s_{n} \rangle$ of the elements in $S$,
in which $s_{i} \prec s_{j}$ if and only if $i< j$.
With abuse of notation, we sometimes write ${\prec} = \langle s_{1}, s_{2}, \dots, s_{n} \rangle$.

\subsection{Double-threshold graphs}

A graph $G=(V,E)$ is a \emph{threshold graph}
if there exist a vertex-weight function
$\wei \colon V \to \R$ and a real number $\lb \in \R$ with the following property:
\[
 uv \in E \iff \lb \le \wei(u) + \wei(v).
\]
A graph $G=(V,E)$ is a \emph{double-threshold graph}
if there exist a vertex-weight function
$\wei \colon V \to \R$ and two real numbers $\lb, \ub \in \R$ with the following property:
\[
 uv \in E \iff \lb \le \wei(u) + \wei(v) \le \ub.
\]
Then, we say that the double-threshold graph $G$ is \emph{defined} by $\wei$, $\lb$ and $\ub$.

Jamison and Sprague~\cite{JamisonS20} showed that 
we can use any values as $\lb$ and $\ub$ for defining a double-threshold graph
and that 
we do not have to consider degenerated cases, where 
some vertices have the same weight or
some weight sum equals to the lower or upper bound.
\begin{lemma}
[\cite{JamisonS20}]
\label{lem:nice-representation}
Let $G = (V,E)$ be a double-threshold graph.
For every pair $\lb, \ub \in \R$ with $\lb < \ub$,
there exists $\wei \colon V \to \R$ defining $G$ with $\lb$ and $\ub$
such that $\wei(u) \ne \wei(v)$ if $u \ne v$,
and $\wei(u) + \wei(v) \notin \{\lb, \ub\}$ for all $(u,v) \in V^{2}$.
\end{lemma}

Every threshold graph is a double-threshold graph as one can set a dummy upper bound $\ub > \max \{\wei(u) + \wei(v) \mid u,v \in V\}$.
From the definition of double-threshold graphs,
we can easily see that they coincide with the star pairwise compatibility graphs.
\begin{observation}
\label{obs:dtg=spc}
 A graph is a double-threshold graph if and only if it is a star pairwise compatibility graph.
\end{observation}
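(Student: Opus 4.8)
The plan is to exploit the one structural feature of a star that matters here: if $T$ is a star with center $c$, then any two leaves are joined by a unique path through $c$, so for distinct leaves $u$ and $v$ the weighted distance is $d_{T}(u,v) = w_{T}(cu) + w_{T}(cv)$, the sum of the two leaf-edge weights. This identity converts the pairwise-compatibility condition on a star (a sum of two \emph{edge} weights lying in $[\lb,\ub]$) into the double-threshold condition (a sum of two \emph{vertex} weights lying in $[\lb,\ub]$) and back again. I therefore expect the equivalence to follow from a direct translation between the two representations that keeps the bounds $\lb$ and $\ub$ unchanged.

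First I would prove that every double-threshold graph is a star pairwise compatibility graph. Suppose $G=(V,E)$ is defined by $\wei$, $\lb$, and $\ub$. I construct the star $T$ whose leaf set is exactly $V$ and whose center is a fresh vertex $c \notin V$, and I assign to each leaf-edge $cv$ the weight $w_{T}(cv) := \wei(v)$. Then for distinct $u,v \in V$ the identity above gives $d_{T}(u,v) = \wei(u) + \wei(v)$, so $uv \in E$ if and only if $\lb \le d_{T}(u,v) \le \ub$. Hence $(T, w_{T}, \lb, \ub)$ witnesses that $G$ is a star pairwise compatibility graph.

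For the converse I would reverse the reading. Suppose $G$ is realized by $(T, w_{T}, \lb, \ub)$ with $T$ a star whose leaf set is $V$ and whose center is $c$. Defining vertex weights by $\wei(v) := w_{T}(cv)$, the same computation yields $\wei(u) + \wei(v) = d_{T}(u,v)$, so $uv \in E$ if and only if $\lb \le \wei(u) + \wei(v) \le \ub$; thus $G$ is a double-threshold graph defined by $\wei$, $\lb$, $\ub$. Since each transformation is exact and leaves the bounds untouched, there is no genuine obstacle in the argument. The only points deserving a word of care are that the pairwise-compatibility definition allows arbitrary real edge weights (so the recovered vertex weights $\wei$ may be negative), and the degenerate graphs of very small order, where one should check that a star with the prescribed leaf set still has the intended structure.
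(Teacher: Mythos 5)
Your proof is correct and matches the paper's own argument: both directions use the direct translation $\wei(v) \leftrightarrow w_{T}(cv)$ between vertex weights and leaf-edge weights of the star, keeping $\lb$ and $\ub$ unchanged, via the identity $d_{T}(u,v) = w_{T}(cu) + w_{T}(cv)$. Nothing further is needed.
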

\begin{proof}
Let $G = (V,E)$ be a double-threshold graph defined by $\wei\colon V \to \R$ and $\lb, \ub \in \R$.
We construct an edge-weighted star $S$ with the center $c$ and the leaf set $V$
such that the weight $\wei'(vc)$ of each edge $vc \in E(S)$ is $\wei(v)$.
Then, $G$ is the star pairwise compatibility graph defined by $(S, \wei',\lb,\ub)$.

Let $G = (V,E)$ be a star pairwise compatibility graph defined by $(S, \wei, \lb, \ub)$,
where the star $S$ has $c$ as its center.
For each $v \in V$, we set $\wei'(v) = \wei(vc)$.
Then, $G$ is the double-threshold graph defined by $\wei'$, $\lb$, and $\ub$.
\end{proof}

Observation~\ref{obs:dtg=spc} allows us to state the following useful property shown by Xiao and Nagamochi~\cite{XiaoN20} 
in terms of double-threshold graphs.
\begin{lemma}
[\cite{XiaoN20}]
\label{lem:bi-and-nonbi-components}
A graph is a double-threshold graph if and only if
it contains at most one non-bipartite component and all components are double-threshold graphs.
\end{lemma}

The following simple observation is useful when we conduct a detailed analysis on a specific triple $\wei$, $\lb$, $\ub$
defining a double-threshold graph.
\begin{observation}
\label{obs:sandwich}
Let $G = (V,E)$ be a double-threshold graph defined by $\wei\colon V \to \R$ and $\lb, \ub \in \R$.
If $\wei(x) \le \wei(y) \le \wei(z)$ and $xy, yz \in E$ hold for distinct vertices $x,y,z \in V$, then $xz \in E$.
\end{observation}
\begin{proof}
Since $\lb \le \wei(x) + \wei(y) \le \wei(x) + \wei(z)  \le \wei(y) + \wei(z) \le \ub$, we have $xz \in E$.
\end{proof}

The definition of double-threshold graphs can be understood visually in the plane,
by its so called \emph{slab representation}. 
See \figref{fig:dtg_example1} for an example.
In the $xy$-plane, we consider the slab defined by
$\{(x,y) \mid \lb \leq x+y \leq \ub\}$ that is illustrated in gray.
Then, two vertices $u,v \in V$ are joined by an edge if and only if
the point $(\wei(u),\wei(v))$ lies in the slab.

\begin{figure}[thb]
  \centering
  \includegraphics[scale=.8]{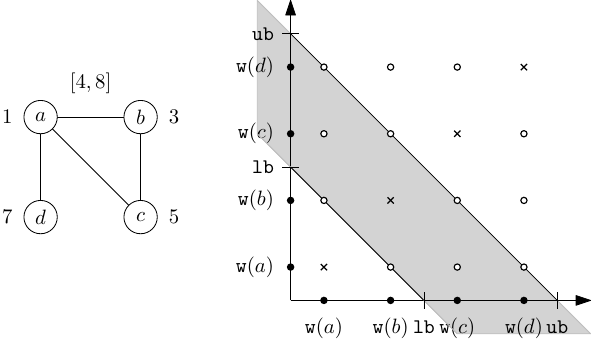}
  \caption{(Left) A double-threshold graph. 
    The weight of each vertex is given as $\wei(a)=1$, $\wei(b)=3$, 
    $\wei(c)=5$, and $\wei(d)=7$; the lower bound is $\lb=4$ and 
    the upper bound is $\ub=8$.
    (Right) The slab representation of the graph.
    A white dot represents the point $(\wei(u),\wei(v))$ for distinct
    vertices $u,v$, and a cross represents the point $(\wei(v),\wei(v))$
    for a vertex $v$.
    Two distinct vertices $u$ and $v$ are joined by an edge if and only
    if the corresponding white dot lies in the gray slab.
  }
  \label{fig:dtg_example1}
\end{figure}

\subsection{Permutation graphs}
A graph $G=(V,E)$ is a \emph{permutation graph}  if there exist linear orderings $\prec_1$ and $\prec_2$ on $V$ with the following property:
\begin{align}
uv \in E \iff (u \prec_1 v \text{ and } v \prec_2 u) \text{ or } (u \prec_2 v \text{ and } v \prec_1 u). \label{eq:perm}
\end{align}
A graph is a \emph{bipartite permutation graph} if it is a bipartite graph and a permutation graph.
It is known that every permutation graph admits a \emph{transitive orientation}~\cite{Golumbic04},
which gives a direction to each edge in such a way that the existence 
of directed edges from $x$ to $y$ and from $y$ to $z$ implies a directed edge from $x$ to $z$ as well.

Jamison and Sprague~\cite{JamisonS21} showed that 
permutation graphs and bipartite permutation graphs have strong connections to double-threshold graphs as follows.
\begin{lemma}
[\cite{JamisonS21}]
\label{lem:DTGisPerm}
Every double-threshold graph is a permutation graph. 
\end{lemma}
\begin{lemma}
[\cite{JamisonS21}]
\label{lem:bipartite}
The bipartite double-threshold graphs are exactly the bipartite permutation graphs.
\end{lemma}

We say that the orderings $\prec_{1}$ and $\prec_{2}$ in \eqref{eq:perm} \emph{define} the permutation graph $G$.
We call $\prec_{1}$ a \emph{permutation ordering} of $G$ if there exists a linear ordering $\prec_{2}$ satisfying the condition above. 
Since $\prec_{1}$ and $\prec_{2}$ play a symmetric role in the definition,
$\prec_{2}$ is also a permutation ordering of $G$.
Note that for a graph $G$ and a permutation ordering $\prec_{1}$ of $G$,
the other ordering $\prec_{2}$ that defines $G$ together with $\prec_{1}$ is uniquely determined.
Also note that if $\prec_{1}$ and $\prec_{2}$ define $G$,
then $\prec_{1}^{\mathrm{R}}$ and $\prec_{2}^{\mathrm{R}}$ also define $G$,
where $\prec_{i}^{\mathrm{R}}$ denotes the reversed ordering of $\prec_{i}$.

We often represent a permutation graph with a \emph{permutation diagram}, which is drawn as follows
(see \figref{fig:permdia_example1} for an illustration).
Imagine two horizontal parallel lines $\ell_1$ and $\ell_2$ on the plane.
Then, we place the vertices in $V$ on $\ell_1$ from left to right according to the permutation ordering $\prec_1$ as distinct points, and similarly
place the vertices in $V$ on $\ell_2$ from left to right according to $\prec_2$ as distinct points.
The positions of $v\in V$ can be represented by $x$-coordinates on $\ell_{1}$ and $\ell_{2}$, which are denoted by $\xcor_{1}(v)$ and $\xcor_{2}(v)$, respectively.  
We connect the two points representing the same vertex with a line segment.
The process results in a diagram (called a permutation diagram) with $|V|$ line segments.
By definition, $uv \in E$ if and only if the line segments representing $u$ and $v$ cross in the permutation diagram,
which is equivalent to the inequality $(\xcor_1(u)-\xcor_1(v)) (\xcor_2(u)-\xcor_2(v)) < 0$.

\begin{figure}[thb]
  \centering
  \includegraphics[scale=.8]{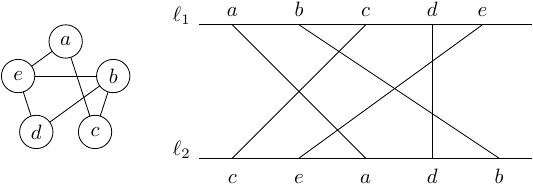}
  \caption{An example of a permutation diagram.
  (Left) A permutation graph $G$.
  (Right) A permutation diagram that represents $G$.}
  \label{fig:permdia_example1}
\end{figure}

Conversely, from a permutation diagram of $G$, we can extract linear orderings $\prec_{1}$ and $\prec_{2}$ as
\begin{align*}
\xcor_1(u) < \xcor_1(v) &\iff u \prec_{1} v,\\
\xcor_2(u) < \xcor_2(v) &\iff u \prec_{2} v.
\end{align*}
When those conditions are satisfied, we say that 
the orderings of the $x$-coordinates on $\ell_1$ and $\ell_2$ are \emph{consistent} with the linear orderings $\prec_{1}$ and $\prec_{2}$, respectively.

Although a permutation graph may have an exponential number of permutation orderings,
it is essentially unique for a connected bipartite permutation graph in the sense of Lemma~\ref{lem:uniquerep} below.
For a graph $G = (V,E)$, linear orderings $\langle v_{1}, \dots, v_{n} \rangle$ and $\langle v'_{1}, \dots, v'_{n} \rangle$ on $V$
are \emph{neighborhood-equivalent} if $N(v_{i}) = N(v'_{i})$ for all $i$.
\begin{lemma}
[\cite{HeggernesHMV15}]
\label{lem:uniquerep}
Let $G$ be a connected bipartite permutation graph defined by $\prec_{1}$ and $\prec_{2}$.
Then, every permutation ordering of $G$ is neighborhood-equivalent to 
$\prec_{1}$, $\prec_{2}$, $\prec_{1}^{\mathrm{R}}$, or $\prec_{2}^{\mathrm{R}}$.
\end{lemma}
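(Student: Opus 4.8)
The plan is to route everything through the notion of a \emph{strong ordering} of the two color classes. Write $G = (X, Y; E)$. First I would observe that in \emph{any} permutation representation $(\prec_1', \prec_2')$ of $G$, the two orderings must agree on $X$ and agree on $Y$: since $X$ is independent, no two $X$-segments cross, so $x \prec_1' x' \iff x \prec_2' x'$ for $x, x' \in X$, and symmetrically for $Y$. Thus every permutation ordering $\prec_1'$ induces a pair of within-class orderings $<_X$ and $<_Y$, and the adjacency condition translates exactly into the statement that $(<_X, <_Y)$ is a strong ordering; equivalently, under $(<_X, <_Y)$ each $N(x)$ is an interval of $Y$ whose left and right endpoints are nondecreasing in $<_X$, and symmetrically for each $N(y)$. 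With this translation in hand, the lemma reduces to two claims: (i) the pair $(<_X, <_Y)$ is unique up to simultaneous reversal and up to permuting vertices with equal neighborhood (\emph{twins}); and (ii) once $(<_X, <_Y)$ and $E$ are fixed, the full linear order on $V$ is forced up to twins and the single binary choice that distinguishes $\prec_1$ from $\prec_2$.

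For claim (i) the key subtlety, which I expect to be the main obstacle, is that the relative order of two non-twin vertices of the same class is \emph{not} determined by their neighborhoods alone: if $N(x) \subsetneq N(x')$ then the monotone-interval structure is consistent with $x <_X x'$ (sharing a left endpoint) and also, a priori, with $x' <_X x$ (sharing a right endpoint), so no purely pairwise rule suffices. The determination must therefore be global. I would fix a single orientation by choosing the order of the endpoints of one edge, and then propagate this choice along a breadth-first search, using the monotone-interval structure to force the position of each newly reached vertex relative to the already-ordered ones; connectivity guarantees that every vertex is eventually forced and, crucially, that the forced orientations never conflict. The vertices whose order is never pinned down by this propagation are precisely the twins, which is exactly the slack absorbed by neighborhood-equivalence. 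Establishing that the propagation is globally consistent (no cycle of forcings demands contradictory orders) is where connectivity does the real work and where the argument is least routine.

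For claim (ii) I would show that, given $(<_X, <_Y)$ and $E$, the interleaving of the $X$- and $Y$-vertices in the merged order is itself forced up to twins: a non-adjacent cross pair is non-inverted, so its relative position agrees in $\prec_1'$ and $\prec_2'$, and the interval structure together with the already-fixed within-class orders pins this down, the only residual freedom being the global choice of which class is ``swept first.'' That single choice is exactly the swap that turns $\prec_1$ into $\prec_2$. Combining the two choices of orientation from (i) with this swap yields precisely the four candidates $\prec_1$, $\prec_2$, $\prec_1^{\mathrm{R}}$, $\prec_2^{\mathrm{R}}$, and shows every permutation ordering of $G$ is neighborhood-equivalent to one of them, as claimed.
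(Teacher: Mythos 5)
The paper never proves this lemma: it is imported directly from \cite{HeggernesHMV15} (hence the citation in the statement), so there is no in-paper argument to compare against and your attempt has to stand on its own. Your framework is the right one, and it matches how the literature treats the question: pass to the within-class orders (which are forced to agree on both lines because each color class is independent, so same-class segments never cross), observe that adjacency turns the pair $(<_X, <_Y)$ into a strong ordering with the monotone-interval property, and reduce essential uniqueness to (i) uniqueness of $(<_X, <_Y)$ up to simultaneous reversal and twins, plus (ii) forcedness of the interleaving up to the single swap distinguishing $\prec_1$ from $\prec_2$.

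However, what you have written is a plan, not a proof: all of the mathematical content of the lemma sits inside claims (i) and (ii), and neither is established. For (i) you propose a BFS propagation and then state yourself that showing the forced orientations never conflict ``is where the argument is least routine'' --- but that consistency statement essentially \emph{is} the lemma, so deferring it means the theorem has been restated rather than proven. The treatment of (ii) has the same character: ``the interval structure together with the already-fixed within-class orders pins this down'' is asserted, not argued. There is also a concrete flaw in the setup of (i): you say you would ``fix a single orientation by choosing the order of the endpoints of one edge,'' but the two endpoints of an edge lie in different color classes, so their relative order is not part of $(<_X, <_Y)$ at all; that binary choice is exactly the $\prec_1$-versus-$\prec_2$ swap of your claim (ii), not the reversal freedom of claim (i), so even the anchoring step of the propagation needs repair. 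A completed proof requires an actual induction --- for instance over BFS layers, showing that the ordering of each layer is determined by that of the previous layer up to permuting vertices with equal neighborhoods, with connectivity supplying the anchor --- or a direct exchange argument comparing two strong orderings; neither is carried out here.
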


A bipartite graph $(X,Y;E)$ is a \emph{unit interval bigraph}
if there is a set of unit intervals $\{I_{v} = [l_{v}, l_{v}+1] \mid v \in X \cup Y\}$ 
such that $xy \in E$ if and only if $I_{x} \cap I_{y} \ne \emptyset$
for $x \in X$ and $y \in Y$.
The class of unit interval bigraphs is known to be equal
to the class of bipartite permutation graphs.
\begin{proposition}
[\cite{HellH04,SenS94,West98}] 
\label{prop:bpg=uib}
A graph is a bipartite permutation graph if and only if 
it is a unit interval bigraph.
\end{proposition}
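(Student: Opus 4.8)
The plan is to prove both directions by explicit transformations between unit interval representations and permutation diagrams, matching the adjacency test $|l_x-l_y|\le 1$ of the former against the crossing test of the latter. I would first treat the easier implication, that every unit interval bigraph $(X,Y;E)$ with intervals $I_v=[l_v,l_v+1]$ is a bipartite permutation graph. Build a permutation diagram by orienting the two sides oppositely: for $x\in X$ place its segment from $\xcor_1(x)=l_x$ on $\ell_1$ to $\xcor_2(x)=l_x+1$ on $\ell_2$, and for $y\in Y$ from $\xcor_1(y)=l_y+1$ to $\xcor_2(y)=l_y$. For two same-part vertices the differences $\xcor_1(u)-\xcor_1(v)$ and $\xcor_2(u)-\xcor_2(v)$ are equal, so the product is nonnegative and the segments never cross, consistent with $X$ and $Y$ being independent. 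For $x\in X$ and $y\in Y$ the crossing product is $(\xcor_1(x)-\xcor_1(y))(\xcor_2(x)-\xcor_2(y))=(l_x-l_y)^2-1$, which is negative exactly when $|l_x-l_y|<1$. Since $I_x\cap I_y\ne\emptyset$ iff $|l_x-l_y|\le 1$, this matches adjacency except on the boundary $|l_x-l_y|=1$, where the closed intervals merely touch.

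That boundary case is the one technical point in this direction, and I would dispose of it by a general-position argument. Because the vertex set is finite and non-adjacency means $|l_x-l_y|>1$ \emph{strictly}, rescaling every left endpoint by a common factor slightly below $1$ compresses each touching (distance exactly $1$) pair to distance below $1$ while keeping every non-adjacent pair above $1$; a further infinitesimal generic perturbation makes all the $\xcor_1$- and $\xcor_2$-coordinates distinct without disturbing any strict inequality. After this normalization the product above is nonzero for every pair and is negative precisely for adjacent cross-part pairs, so the orderings $\prec_1,\prec_2$ read off the coordinates define $G$ as a permutation graph, and it is bipartite by hypothesis.

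For the converse, that every bipartite permutation graph $(X,Y;E)$ is a unit interval bigraph, I would start from a permutation diagram and use that, $X$ and $Y$ being independent, no two segments of the same part cross. Hence $\prec_1$ and $\prec_2$ induce the \emph{same} order on $X$ and the same order on $Y$, yielding linear orders $\sigma_X$ and $\sigma_Y$; from the absence of same-part crossings one then reads off the strong-ordering property, namely that each vertex's neighborhood is an interval in the opposite side's order and that these intervals shift monotonically along $\sigma_X$ and $\sigma_Y$. This qualitative data immediately gives an interval bigraph representation with no interval properly containing another. The main obstacle is the final normalization to \emph{equal} lengths: the diagram only encodes the proper (containment-free) interval structure, and fabricating intervals of one common length that simultaneously realize every intersection and non-intersection is where the real work lies. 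I would resolve it through the bigraph analogue of Roberts' theorem (proper interval bigraph $=$ unit interval bigraph), laying down the unit endpoints inductively along the merged order $\sigma_X\cup\sigma_Y$ so that the spacing respects each consecutiveness interval and rescaling so that a single common length works for every vertex at once; verifying the existence of such a uniform length is the crux of this direction.
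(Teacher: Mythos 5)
The paper gives no proof of this proposition at all: it is imported from the literature \cite{SenS94,West98,HellH04}, which is why it is stated with citations rather than proved. So the only part of your proposal that can be matched against the paper is your first direction, and there the match is exact: your placement $\xcor_{1}(x)=l_{x}$, $\xcor_{2}(x)=l_{x}+1$ for $x\in X$ and $\xcor_{1}(y)=l_{y}+1$, $\xcor_{2}(y)=l_{y}$ for $y\in Y$, with the crossing test $(l_{x}-l_{y})^{2}-1<0$, is precisely the construction in the paper's proof of Lemma~\ref{lem:rightanglerep}. The one degeneracy (closed intervals that merely touch) is handled in the paper by citing \cite{West98} for a representation with all endpoints distinct, whereas you handle it by rescaling by a factor $\rho$ slightly below $1$ and perturbing; since the vertex set is finite and all non-adjacencies are strict inequalities, your argument is sound, and this direction of your proposal is complete.

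The genuine gap is in the converse, which is the hard half of the statement. You reduce ``bipartite permutation graph $\Rightarrow$ unit interval bigraph'' to (i) extracting a strong ordering from the diagram and converting it into a containment-free (proper) interval bigraph representation, and (ii) the bigraph analogue of Roberts' theorem, that proper interval bigraphs are unit interval bigraphs. Step (ii) is never proved: you yourself call the existence of a single uniform length ``the crux of this direction'' and leave it as an intention. But that equivalence is itself one of the nontrivial theorems established in exactly the sources cited for this proposition (notably \cite{HellH04}); invoking it means your argument is, in the end, a citation of the same body of work the paper cites --- legitimate, but then the explicit construction you promise is not delivered. Step (i) also contains only assertions where short arguments are needed: that non-crossing of same-part segments forces each neighborhood to be consecutive in the opposite side's common order and to shift monotonically is true but requires a case analysis on the diagram, and the claim that this ``immediately gives'' a proper interval representation hides the actual placement of intervals, which is where Hell and Huang do real work. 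As it stands, the proposal fully proves one implication and gives only a roadmap, resting on an unproven (or silently cited) theorem, for the other.
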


The following lemma shows that a bipartite permutation graph can be 
represented by a permutation diagram with the special property
that the segments representing vertices of the same set of the bipartition are parallel.
An illustration is given in \figref{fig:lem2-11}.

\begin{figure}[tb]
  \centering
  \includegraphics[scale=.8]{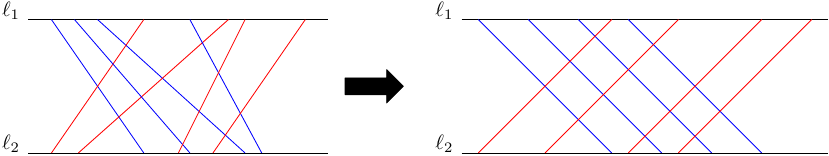}
  \caption{An illustration of Lemma~\ref{lem:rightanglerep}.
  (Left) A permutation diagram of a bipartite permutation graph $G=(X,Y;E)$.
    The vertices in $X$ are represented by blue segments, and
    the vertices in $Y$ are represented by red segments.
  (Right) A permutation diagram that represents $G$ obtained by Lemma~\ref{lem:rightanglerep}.}
  \label{fig:lem2-11}
\end{figure}

\begin{lemma}
\label{lem:rightanglerep}
Let $G=(X, Y; E)$ be a bipartite permutation graph.
Then, $G$ can be represented by a permutation diagram
in which 
$\xcor_{2}(x)=\xcor_{1}(x)+1$ for $x \in X$ and $\xcor_{2}(y)=\xcor_{1}(y)-1$ for $y \in Y$.
\end{lemma}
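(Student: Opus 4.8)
The plan is to read off the desired permutation diagram directly from a unit interval representation. By Proposition~\ref{prop:bpg=uib}, the bipartite permutation graph $G = (X,Y;E)$ is a unit interval bigraph, so I would fix unit intervals $I_v = [l_v, l_v+1]$ for all $v \in X \cup Y$ with $xy \in E \iff I_x \cap I_y \neq \emptyset \iff |l_x - l_y| \le 1$ for $x \in X$, $y \in Y$. The guiding idea is that placing $x \in X$ at $\xcor_1(x) = l_x$ and $\xcor_2(x) = l_x + 1$, and placing $y \in Y$ at $\xcor_1(y) = l_y + 1$ and $\xcor_2(y) = l_y$, makes the slope conditions $\xcor_2(x) = \xcor_1(x)+1$ and $\xcor_2(y) = \xcor_1(y)-1$ hold \emph{by construction}. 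Thus the whole lemma reduces to verifying that this assignment is a legitimate permutation diagram of $G$.

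For the crossing condition I would compute, for $x \in X$ and $y \in Y$, the quantity $(\xcor_1(x)-\xcor_1(y))(\xcor_2(x)-\xcor_2(y))$. With the chosen coordinates this equals $(l_x - l_y - 1)(l_x - l_y + 1) = (l_x - l_y)^2 - 1$, which is negative exactly when $|l_x - l_y| < 1$, i.e.\ exactly when $I_x$ and $I_y$ intersect. Hence an $X$-segment and a $Y$-segment cross if and only if their vertices are adjacent. For two vertices of the same part the analogous product is the perfect square $(l_x - l_{x'})^2$ (respectively $(l_y - l_{y'})^2$), which is positive, so same-part segments never cross; this is consistent since $X$ and $Y$ are independent sets. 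Therefore the crossing pattern of the diagram reproduces $E$ exactly.

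The step to be careful about is the mismatch between the two regimes: adjacency in a unit interval bigraph uses \emph{closed} intersection ($\le 1$), whereas a crossing needs the \emph{strict} inequality ($< 1$), and a valid permutation diagram additionally requires the coordinates on $\ell_1$ (and those on $\ell_2$) to be pairwise distinct. Both issues vanish once I guarantee that no cross-part pair merely touches (no $|l_x - l_y| = 1$) and that the left endpoints within each part are distinct. I would arrange this by a routine normalization of the representation: first replace each $l_v$ by $l_v/(1+\varepsilon)$ for a sufficiently small $\varepsilon > 0$, which shrinks every touching pair into a strict overlap ($|l_x-l_y| = 1$ becomes $1/(1+\varepsilon) < 1$) while keeping every non-edge strictly separated (since there are finitely many pairwise distances, every non-edge distance exceeds $1$ by some fixed slack, so it stays above $1$ as long as $\varepsilon$ is smaller than the minimum slack), and then apply an arbitrarily small generic perturbation to separate coinciding within-part endpoints without disturbing the now-strict inequalities. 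With such a representation the displayed coordinates are pairwise distinct on each line and the crossing pattern equals $E$, which completes the argument. The only genuine content is this normalization; the coordinate computation itself is immediate.
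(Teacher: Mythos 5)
Your proof is correct and takes essentially the same route as the paper: both read the diagram off a unit interval bigraph representation (Proposition~\ref{prop:bpg=uib}) with exactly the coordinate assignment $\xcor_1(x)=l_x$, $\xcor_2(x)=l_x+1$, $\xcor_1(y)=l_y+1$, $\xcor_2(y)=l_y$, and verify the crossing condition reduces to $|l_x-l_y|<1$. The only difference is how the degeneracies (touching intervals and coinciding endpoints) are dispatched: the paper assumes all endpoints distinct by citing~\cite{West98}, whereas you establish this by hand with the scaling-plus-perturbation normalization, which is a valid self-contained substitute.
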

\begin{proof}
By Proposition~\ref{prop:bpg=uib}, 
there is a set of unit intervals $\{I_{v} = [l_{v}, l_{v}+1] \mid v \in X \cup Y\}$ 
such that for $x \in X$ and $y \in Y$, $xy \in E$ if and only if $I_{x} \cap I_{y} \ne \emptyset$.
We can assume that all endpoints of the intervals are distinct; that is, $l_{u} \notin \{l_{v}, l_{v}+1\}$ for all $u,v \in X \cup Y $ with $u \ne v$~\cite{West98}.
For each $x \in X$, we set $\xcor_{1}(x) = l_{x}$ and $\xcor_{2}(x) = l_{x} + 1$.
For each $y \in Y$, we set $\xcor_{1}(y) = l_{y}+1$ and $\xcor_{2}(y) = l_{y}$.
It suffices to show that this permutation diagram represents $G$.
Observe that line segments corresponding to vertices from the same set, $X$ or $Y$,
are parallel and thus do not cross.
For $x \in X$ and $y \in Y$, we have
\begin{align*}
  I_{x} \cap I_{y} \ne \emptyset
  &\iff
  |l_{x} - l_{y}| < 1 && (\because \text{ all endpoints are distinct})
  \\
  &\iff
  l_{x} < l_{y} + 1 \textrm{ and } l_{y} < l_{x} + 1
  \\
  &\iff
  \xcor_{1}(x) < \xcor_{1}(y) \textrm{ and } \xcor_{2}(y) < \xcor_{2}(x) && (\because x \in X, \  y \in Y)
  \\
  &\iff
  (\xcor_{1}(x)-\xcor_{1}(y)) (\xcor_{2}(x)-\xcor_{2}(y)) < 0.
\end{align*}
The $\Leftarrow$ direction of the last equivalence holds 
since $\xcor_{1}(x) < \xcor_{2}(x)$ and $\xcor_{1}(y) > \xcor_{2}(y)$.
Therefore, we conclude that the diagram represents $G$.
\end{proof}

We can show that for every permutation ordering of a connected bipartite permutation graph,
there exists a permutation diagram consistent with the ordering
that satisfies the conditions in Lemma~\ref{lem:rightanglerep}.
\begin{corollary}
\label{cor:rightanglerep}
Let $G=(X, Y; E)$ be a connected bipartite permutation graph
defined by permutation orderings $\prec_{1}$ and $\prec_{2}$.
If the first vertex in $\prec_{1}$ belongs to $X$,
then $G$ can be represented by a permutation diagram
such that
the orderings of the $x$-coordinates on $\ell_{1}$ and $\ell_{2}$ are consistent with $\prec_{1}$ and $\prec_{2}$, respectively,
and that $\xcor_2(x)=\xcor_1(x)+1$ for every $x\in X$ and $\xcor_2(y)=\xcor_1(y)-1$ for every $y\in Y$. 
\end{corollary}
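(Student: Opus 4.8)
The plan is to realize the given orderings by transforming the diagram produced by Lemma~\ref{lem:rightanglerep}. First I would dispose of degenerate cases: if $X=\emptyset$ the hypothesis cannot hold, and a single-vertex graph is trivial, so I assume $G$ is connected with both parts nonempty, whence every vertex has a neighbor. Applying Lemma~\ref{lem:rightanglerep} yields a permutation diagram $D$ satisfying the right-angle conditions; let $\tau_1$ and $\tau_2$ be the orderings read off $\ell_1$ and $\ell_2$.

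The first real step is to pin down the extreme vertices of $D$, since this is what will connect the geometry to the hypothesis ``the first vertex of $\prec_1$ is in $X$''. I would show that the leftmost segment on $\ell_1$ belongs to $X$: if the minimizer $v$ of $\xcor_1$ lay in $Y$, then $\xcor_2(v)=\xcor_1(v)-1$ would also be strictly smallest on $\ell_2$ (every other segment has strictly larger second coordinate), so $v$ would cross no segment and be isolated, contradicting connectivity. Running the same argument on $\ell_2$, and at the right-hand ends, gives that the first vertex of $\tau_1$ is in $X$, of $\tau_2$ is in $Y$, the last vertex of $\tau_1$ is in $Y$, and the last of $\tau_2$ is in $X$. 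Hence $\tau_1,\tau_2,\tau_1^{\mathrm R},\tau_2^{\mathrm R}$ begin, respectively, with a vertex of $X,Y,Y,X$.

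Next I would match $\prec_1$. If two orderings are neighborhood-equivalent, their first vertices have the same neighborhood; in a connected graph that neighborhood is nonempty and hence contained in one side of the bipartition, so the two first vertices lie on the same side. Thus neighborhood-equivalence preserves the starting side. By Lemma~\ref{lem:uniquerep}, $\prec_1$ is neighborhood-equivalent to one of $\tau_1,\tau_2,\tau_1^{\mathrm R},\tau_2^{\mathrm R}$, and since $\prec_1$ starts in $X$ only $\tau_1$ and $\tau_2^{\mathrm R}$ survive. I then note that exactly two symmetries of the diagram preserve the right-angle conditions: the identity, whose $\ell_1$-ordering is $\tau_1$, and the $180^{\circ}$ rotation $(\xcor_1,\xcor_2)\mapsto(-\xcor_2,-\xcor_1)$, whose $\ell_1$-ordering is $\tau_2^{\mathrm R}$ and which preserves $\xcor_2(\cdot)-\xcor_1(\cdot)$ on each side; a left-right reflection or a swap of the two lines would instead flip the sign of $\xcor_2(\cdot)-\xcor_1(\cdot)$ on $X$ and is therefore inadmissible. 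Choosing whichever of these two diagrams has $\ell_1$-ordering neighborhood-equivalent to $\prec_1$, I get a right-angle diagram whose $\ell_1$-ordering equals $\prec_1$ up to permuting vertices with identical neighborhoods.

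Finally I would fix these twins by a position-by-position relabeling: since the current $\ell_1$-ordering and $\prec_1$ carry the same neighborhood in each position, replacing a segment by another with the same neighborhood leaves every crossing, and the right-angle conditions, intact (by symmetry of adjacency), so the $\ell_1$-ordering can be made exactly $\prec_1$. Because the $\ell_2$-ordering that completes a prescribed $\ell_1$-ordering into a permutation diagram of $G$ is unique, the resulting $\ell_2$-ordering must be exactly $\prec_2$, finishing the proof. I expect the third paragraph to be the main obstacle: the hypothesis that $\prec_1$ begins in $X$ is precisely what excludes the orientation-reversing symmetries, and one must notice that the two admissible first-side-$X$ orderings $\tau_1$ and $\tau_2^{\mathrm R}$ are in general genuinely different (as already for $P_4$), so both the identity and the $180^{\circ}$ rotation are needed.
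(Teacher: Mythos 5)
Your proposal is correct and follows essentially the same route as the paper's proof: apply Lemma~\ref{lem:rightanglerep}, use Lemma~\ref{lem:uniquerep} together with the hypothesis on the first vertex of $\prec_{1}$ to reduce to the orderings $\tau_1$ and $\tau_2^{\mathrm{R}}$, rotate the diagram by $180$ degrees if necessary, relabel segments position-by-position using neighborhood-equivalence, and invoke uniqueness of the completing ordering to recover $\prec_{2}$. The only difference is that you verify in more detail some facts the paper asserts implicitly (the sides of the extreme vertices, and that neighborhood-equivalence and the rotation respect the bipartition), which strengthens rather than changes the argument.
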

\begin{proof}
Since $G$ is connected, the last vertex in $\prec_{1}$ belongs to $Y$,
the first vertex in $\prec_{2}$ belongs to $Y$,
and the last vertex in $\prec_{2}$ belongs to $X$.

By Lemma~\ref{lem:rightanglerep},
$G$ can be represented by a permutation diagram $D'$ in which 
$\xcor_{2}(x)=\xcor_{1}(x)+1$ for $x \in X$ and $\xcor_{2}(y)=\xcor_{1}(y)-1$ for $y \in Y$.
Let $\prec'_{1}$ and $\prec'_{2}$ be the permutation orderings corresponding to $\ell_{1}$ and $\ell_{2}$,
respectively, in this diagram $D'$.
Lemma~\ref{lem:uniquerep} and the assumption on the first vertex in $\prec_{1}$ imply that
$\prec_{1}$ is neighborhood-equivalent to $\prec'_{1}$ or $(\prec'_{2})^{R}$.
We may assume that $\prec_{1}$ is neighborhood-equivalent to $\prec'_{1}$
since otherwise we can rotate the diagram $D'$ by $180$ degrees
and get a permutation diagram of $G$ in which the ordering on $\ell_{1}$ is $\prec_{1}$,
$\xcor_{2}(x)=\xcor_{1}(x)+1$ for $x \in X$, and $\xcor_{2}(y)=\xcor_{1}(y)-1$ for $y \in Y$.

Now we can construct a desired permutation diagram of $G$ using $\prec_{1}$ and $D'$
by appropriately giving a mapping between segments and vertices.
That is, for each $i \in \{1, \dots, |X \cup Y|\}$, 
we assign the $i$th vertex in $\prec_{1}$ to the segment in $D'$ with the $i$th smallest $x$-coordinate on $\ell_1$.
This new diagram is a permutation diagram of $G$ since $\prec_{1}$ is neighborhood-equivalent to $\prec'_{1}$.
Since $G$ and $\prec_{1}$ uniquely determine the ordering on $\ell_{2}$,
the $x$-coordinates $\xcor_{2}$ on $\ell_{2}$ are consistent with $\prec_{2}$.
\end{proof}


\section{New characterization}
\label{sec:characterization}

In this section, we present a new characterization of double-threshold graphs (Theorem~\ref{thm:non-bipartite}).
This is one of our main results and a key ingredient of the linear-time algorithm given in the next section.
Recall that Lemma~\ref{lem:bipartite} characterizes the bipartite double-threshold graphs
as the bipartite permutation graphs, which can be recognized in linear time~\cite{SpinradBS87,Sprague95}.
Thus, we are going to focus on \emph{non-bipartite} graphs in this section.

Let $G=(V,E)$ be a graph.
From $G$ and a vertex subset $M \subseteq V$,
we construct an auxiliary bipartite graph $G'_{M}=(V',E')$
defined as follows (see \figref{fig:auxgraph_example1}):
\begin{align*}
    V' &= \{v,\overline{v} \mid v \in V\},
    & E' &= \{u\overline{v} \mid uv \in E\} \cup \{v\overline{v} \mid v \in M\}.
\end{align*}
Note that $(V, \{\overline{v} \mid v \in V\})$ is a bipartition of $G'_{M}$ no matter what $M$ is.

\begin{figure}[tb]
  \centering
  \includegraphics[scale=.75]{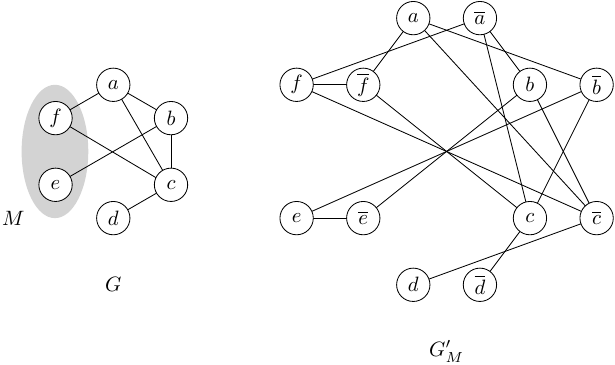}
  \caption{An example of an auxiliary bipartite graph.
    (Left) A graph $G$ and $M=\{e,f\}$.
    (Right) The auxiliary bipartite graph $G'_M$.
    }
  \label{fig:auxgraph_example1}
\end{figure}
An \emph{efficient maximum clique} $K$ of a graph $G$ is
a maximum clique (i.e., a clique of the maximum size)
that minimizes the degree sum $\sum_{v \in K} \deg_{G}(v)$.
See \figref{fig:effcliq_example1}.
\begin{figure}[t]
  \centering
  \includegraphics[scale=.8]{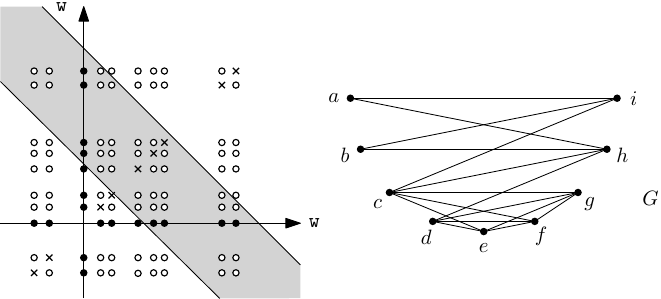}
  \caption{An example of an efficient maximum clique.
    (Left) A slab representation of a double-threshold graph $G$.
    (Right) 
    The vertices of $G$ are ordered in the increasing order of their weights.
    The graph $G$ has two maximum cliques $Q_1=\{c,e,f,g\}$ and $Q_2=\{d,e,f,g\}$.
    The degree sums are $\sum_{v \in Q_1}\deg_G(v) = 5+4+4+4 = 17$, and
    $\sum_{v \in Q_2}\deg_G(v) = 4+4+4+4 = 16$.
    Therefore, $Q_2$ is the only efficient maximum clique of $G$.
    }
  \label{fig:effcliq_example1}
\end{figure}

Using these terms, we present a characterization of \emph{non-bipartite} double-threshold graphs as follows.
\begin{restatable}{theorem}{newcharacterization}
\label{thm:non-bipartite}
For a non-bipartite graph $G$, the following are equivalent.
\begin{enumerate}
  \item $G$ is a double-threshold graph. \label{itm:chara_dtg}
  \item For every efficient maximum clique $M$ of $G$, the graph $G'_{M}$ is a bipartite permutation graph. \label{itm:every_emc}
  \item For some efficient maximum clique $M$ of $G$, the graph $G'_{M}$ is a bipartite permutation graph. \label{itm:some_emc}
\end{enumerate}
\end{restatable}


The rest of this section is devoted to a proof of Theorem~\ref{thm:non-bipartite}.
The following is a quick overview of the proof steps (some terms will be defined later).
\begin{enumerate}
  \item We first prove the key lemma (Lemma~\ref{lem:key-lemma}) 
  ensuring that a graph is a double-threshold graph if and only if 
  $G'_{M}$ is a permutation graph with a ``symmetric'' permutation diagram,
  where $M \subseteq V$ is the set of ``mid-weight'' vertices.
  
  \item We then show that every efficient maximum clique can be the set of mid-weight vertices
  by proving a couple of lemmas (Lemmas~\ref{lem:mid-weight_maximal-clique} and \ref{lem:efficient-max-clique}).
  
  \item Next, we show that the symmetry required in the key lemma follows for free
  if $M$ is a clique (Lemma~\ref{lem:perm=sym}),
  which is true when we set $M$ to be the set of mid-weight vertices.

  \item Finally, we complete the proof of Theorem~\ref{thm:non-bipartite} by putting everything together.
\end{enumerate}

We start with the following simple but useful fact.
\begin{lemma}
\label{lem:GMconn}
For a connected non-bipartite graph $G=(V, E)$ and 
a vertex subset $M \subseteq V$, $G'_{M}$ is connected. 
\end{lemma}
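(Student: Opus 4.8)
The plan is to reduce to the case $M = \emptyset$ and then recognize $G'_{\emptyset}$ as the bipartite double cover of $G$. Since $G'_{M}$ and $G'_{\emptyset}$ have the same vertex set $V'$, and $E'(G'_{M}) = E'(G'_{\emptyset}) \cup \{v\overline{v} \mid v \in M\} \supseteq E'(G'_{\emptyset})$, the graph $G'_{\emptyset}$ is a spanning subgraph of $G'_{M}$. As adding edges never disconnects a graph, it suffices to prove that $G'_{\emptyset}$ is connected.

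The key observation is that walks in $G$ lift to walks in $G'_{\emptyset}$, with the parity of the length determining the endpoint. First I would note that every edge $uv \in E$ contributes both $u\overline{v}$ and $v\overline{u}$ to $E'$, and that each edge of $G'_{\emptyset}$ runs between an unbarred vertex and a barred vertex. Thus a single step in $G$ along $uv$ corresponds to a step in $G'_{\emptyset}$ that toggles between the unbarred side $\{w \mid w \in V\}$ and the barred side $\{\overline{w} \mid w \in V\}$. Consequently, a walk of length $k$ from $u$ to $v$ in $G$ lifts to a walk in $G'_{\emptyset}$ ending at $v$ if $k$ is even and at $\overline{v}$ if $k$ is odd.

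Next I would exploit non-bipartiteness. Because $G$ is connected and non-bipartite, it contains a cycle of odd length; by concatenating a path from an arbitrary vertex $v$ to that cycle, traversing the cycle once, and returning along the same path, one obtains a closed walk through $v$ whose length is odd (the path contributes an even amount, the cycle an odd amount). Lifting this closed walk yields a walk in $G'_{\emptyset}$ from $v$ to $\overline{v}$, so $v$ and $\overline{v}$ lie in the same component for every $v \in V$. Finally, fixing a vertex $u$ and using connectedness of $G$, there is a walk from $u$ to each $v$, which lifts to a walk from $u$ to either $v$ or $\overline{v}$; combined with the fact that $v$ and $\overline{v}$ already lie in one component, both copies of every vertex belong to the component of $u$. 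Hence $G'_{\emptyset}$ is connected, and therefore so is $G'_{M}$.

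I expect the middle step to be the main obstacle: one must produce a closed walk of \emph{odd} length through \emph{every} vertex (not merely an odd cycle somewhere), and the cleanest way to do this is the detour-plus-cycle argument above, which crucially uses both the non-bipartiteness and the connectedness of $G$. The remaining parts are routine bookkeeping about how each lifted edge toggles the two sides of the bipartition.
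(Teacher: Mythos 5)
Your proof is correct and takes essentially the same route as the paper's: both arguments lift walks of $G$ to the auxiliary graph, using the fact that a connected non-bipartite graph admits walks of both parities between any pair of vertices, so that $u$, $\bar{u}$, $v$, and $\bar{v}$ all land in one component. Your version merely makes explicit what the paper leaves implicit, namely the reduction to the spanning subgraph $G'_{\emptyset}$ and the detour-through-an-odd-cycle construction of odd closed walks.
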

\begin{proof}
For any $u, v \in V$, since $G$ is connected and non-bipartite, $G$ contains both an odd walk and an even walk from $u$ to $v$. This shows that $G'_{M}$ contains walks from $u$ to $v$, from $u$ to $\bar v$, from $\bar u$ to $v$, and from $\bar u$ to $\bar v$. Hence, $G'_{M}$ is connected. 
\end{proof}

For the auxiliary graph $G'_{M} = (V', E')$ of $G=(V,E)$,
a linear ordering on $V'$ represented by $\langle w_1, w_2, \dots, w_{2n} \rangle$
is \emph{symmetric} if $w_i=v$ implies $w_{2n-i+1}=\bar v$ for any $v \in V$ and any $i \in \{1,2,\dots, 2n\}$.
\begin{lemma}
\label{lem:key-lemma}
Let $G=(V,E)$ be a non-bipartite graph and $M \subseteq V$.
The following are equivalent. 
\begin{enumerate}
    \item $G$ is a double-threshold graph defined by $\wei \colon V \to \R$ and $\lb,\ub \in \R$ such that
    $M = \{v \in V \mid \lb/2 \le \wei(v) \le \ub/2\}$. \label{itm:dtg_with_M} 
    
    \item The auxiliary graph $G'_{M} = (V',E')$ can be represented by a permutation diagram
    in which both orderings $\prec_1$ and $\prec_2$ are symmetric.\label{itm:sym_diagram}
\end{enumerate}
\end{lemma}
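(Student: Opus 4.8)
For \ref{itm:dtg_with_M}~$\Rightarrow$~\ref{itm:sym_diagram}: the natural device is to work with \emph{symmetric coordinate functions}, i.e.\ to insist that $\xcor_1(\bar v)=-\xcor_1(v)$ and $\xcor_2(\bar v)=-\xcor_2(v)$; this is exactly what makes both orderings symmetric, since negating coordinates sends the $i$-th point to the $(2n-i+1)$-th. First I would invoke Lemma~\ref{lem:nice-representation} to assume $\lb<\ub$, all weights distinct, and no weight sum equal to $\lb$ or $\ub$ — and here its first property is essential, as it keeps the set $M$ unchanged. Then for $v\in V$ I set $\xcor_1(v)=\wei(v)-\lb/2$ and $\xcor_2(v)=\wei(v)-\ub/2$, extending to barred copies by negation. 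A direct check shows all coordinates on each line are distinct (using precisely the two non-degeneracy conditions), that two unbarred or two barred segments never cross (their two coordinate differences are equal, so the product is a positive square), and that an unbarred $v$ and a barred $\bar w$ cross iff $(\wei(v)+\wei(w)-\lb)(\wei(v)+\wei(w)-\ub)<0$, i.e.\ iff $\lb<\wei(v)+\wei(w)<\ub$. For $v\ne w$ this is $vw\in E$, i.e.\ $v\bar w\in E'$; for $v=w$ it reads $\lb/2\le\wei(v)\le\ub/2$, i.e.\ $v\in M$, i.e.\ $v\bar v\in E'$. Hence the diagram represents $G'_M$ with both orderings symmetric.

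For \ref{itm:sym_diagram}~$\Rightarrow$~\ref{itm:dtg_with_M}: the readout is immediate \emph{once} the diagram is simultaneously symmetric and in the right-angle form of Lemma~\ref{lem:rightanglerep} (with $\xcor_2-\xcor_1=\pm1$ on the two sides). Indeed symmetry turns $\xcor_1(v)-\xcor_1(\bar w)$ into $\xcor_1(v)+\xcor_1(w)$, and the right-angle form makes the crossing condition a two-sided bound on this quantity; so $\wei(v):=\xcor_1(v)$ for $v\in V$, together with a suitable $\lb,\ub$ (an interval of length $2$), defines $G$, and the diagonal edges $v\bar v$ recover exactly $M=\{v:\lb/2\le\wei(v)\le\ub/2\}$. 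The whole task is therefore to convert the given symmetric diagram into a symmetric right-angle one. Assuming $G$, and hence $G'_M$ by Lemma~\ref{lem:GMconn}, is connected, I would apply Corollary~\ref{cor:rightanglerep} to the symmetric orderings $\prec_1,\prec_2$ to get a right-angle diagram $D$ consistent with them. The key observation is that the involution $\sigma$ swapping $v\leftrightarrow\bar v$ is an automorphism of $G'_M$ (since $u\bar v\in E'\Leftrightarrow uv\in E\Leftrightarrow v\bar u\in E'$); consequently the diagram $\sigma(D)$ obtained by applying $\sigma$ and negating all coordinates is again a right-angle diagram representing $G'_M$, and because $\prec_1,\prec_2$ are symmetric one checks $\sigma(D)$ induces the \emph{same} orderings as $D$. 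I would then average the coordinates of $D$ and $\sigma(D)$: the average keeps the right-angle offsets and the common orderings, is symmetric by construction, and represents the same graph — the two diagrams share the orderings, so for every pair the two values of $\xcor_1(v)-\xcor_1(\bar w)$ fall in the same one of the three regions cut out by the length-$2$ crossing window, whence so does their average. Reading off $\wei,\lb,\ub$ as above completes the implication.

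The main obstacle is exactly this normalization. Symmetric orderings alone are not enough: the readout genuinely needs $\xcor_1(v)-\xcor_2(v)$ to be constant on $V$ (the right-angle condition) \emph{in addition} to $\xcor_1(\bar v)=-\xcor_1(v)$, so one must secure a representation with both properties at once — which is why I route through Corollary~\ref{cor:rightanglerep} and then re-symmetrize using the automorphism $\sigma$ and averaging, rather than trying to assign weights directly from the given (possibly non-right-angle) symmetric coordinates. Two secondary points need care: the reduction to the connected case, needed to apply Corollary~\ref{cor:rightanglerep} and the essential-uniqueness Lemma~\ref{lem:uniquerep} on which it rests; and a genericity assumption that no crossing window is hit exactly, so that the $\le$ in the definition of $M$ agrees with strict inequalities — this can be arranged by the distinct-endpoint normalization underlying Lemma~\ref{lem:rightanglerep}.
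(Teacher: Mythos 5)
Your proposal is correct, and its forward direction is essentially the paper's: the paper normalizes $\lb=0$, $\ub=2$ via Lemma~\ref{lem:nice-representation} and sets $\xcor_1(v)=\wei(v)-1$, $\xcor_2(v)=\wei(v)$, $\xcor_i(\bar v)=-\xcor_i(v)$, which is exactly your construction up to swapping the roles of $\ell_1$ and $\ell_2$ and not rescaling the thresholds. The backward direction shares the paper's skeleton (connectivity of $G'_M$ via Lemma~\ref{lem:GMconn}, then Corollary~\ref{cor:rightanglerep} applied to the given symmetric orderings) but finishes by a genuinely different device. The paper never restores coordinate-level symmetry: it works with the merely consistent right-angle diagram and exploits symmetry of the \emph{orderings} alone, noting that $u\prec_1\bar v \iff v\prec_1\bar u$ converts, via consistency, into $\xcor_1(u)\le\xcor_1(\bar v)\iff \xcor_1(u)+\xcor_1(v)\le\xcor_1(\bar u)+\xcor_1(\bar v)$ (dually on $\ell_2$), and then defines $\wei(v)=\bigl(\xcor_2(v)-\xcor_2(\bar v)\bigr)/2$ with $\lb=0$, $\ub=2$. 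You instead re-symmetrize the coordinates by averaging $D$ with $\sigma(D)$; your checks that $\sigma(D)$ is again a right-angle diagram of $G'_M$ inducing the same orderings (automorphism plus ordering symmetry) and that averaging preserves offsets and the common orderings are sound — and in fact, once the averaged diagram induces the same pair of orderings it automatically represents the same graph, so your ``three regions'' argument can be dispensed with. What each buys: the paper's algebraic readout is shorter and shows that no new diagram is needed at all; your symmetrization yields a diagram that literally inverts the forward construction, which is arguably more transparent. Two small repairs: before invoking Corollary~\ref{cor:rightanglerep} you must fix the side convention, i.e., assume (as the paper does, ``by symmetry'') that the first vertex of $\prec_1$ lies in $V$ rather than in $\{\bar v \mid v\in V\}$; and your connectivity caveat is real but is shared with the paper itself — Lemma~\ref{lem:GMconn} requires $G$ connected, a hypothesis absent from the lemma's statement, and the paper silently makes the same assumption that you at least flag explicitly.
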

\begin{proof}
(\ref{itm:dtg_with_M}$\implies$\ref{itm:sym_diagram})
An illustration is given in \figref{fig:lem4-2}.
Let $G$ be a double-threshold graph defined by $\wei\colon V \to \R$ and $\lb, \ub \in \R$
such that $M = \{v \in V \mid \lb/2 \le \wei(v) \le \ub/2\}$.
By Lemma~\ref{lem:nice-representation}, we can assume that $\lb = 0$ and $\ub = 2$,
that $\wei(u) + \wei(v) \notin \{0,2\}$ for every $(u,v) \in V^{2}$,
and that $\wei(u) \ne \wei(v)$ if $u \ne v$.
We construct a permutation diagram of $G'_{M}$ as follows. 
Let $\ell_{1}$ and $\ell_{2}$ be two horizontal parallel lines. For each vertex $w \in V'$,
we set the $x$-coordinates $\xcor_{1}(w)$ and $\xcor_{2}(w)$ on $\ell_{1}$ and $\ell_{2}$ as follows: for any $v\in V$, 
\begin{align*}
    \xcor_1(v)&=\wei(v)-1, & \xcor_1(\bar v)&=1-\wei(v), \\
    \xcor_2(v)&=\wei(v),   & \xcor_2(\bar v)&=-\wei(v).
\end{align*} 
Since $\wei(u) + \wei(v) \notin \{0,2\}$ for every $(u,v) \in V^{2}$ and $\wei(u) \ne \wei(v)$ if $u \ne v$,
the $x$-coordinates are distinct on $\ell_{1}$ and on $\ell_{2}$.
By connecting $\xcor_{1}(w)$ and $\xcor_{2}(w)$ with a line segment for each $w \in V'$, we get a permutation diagram.
The line segments corresponding to the vertices in $V$ have negative slopes,
and the ones corresponding to the vertices in $V' \setminus V$ have positive slopes.
Thus, for any two vertices $u,v \in V$, the line segments corresponding to $u$ and $\bar{v}$ cross
if and only if both $\xcor_1(u) \le \xcor_1(\bar v)$ and $\xcor_2(u) \ge \xcor_2(\bar v)$ hold,
which is equivalent to $0 \le \wei(u)+\wei(v) \le 2$, and thus to $u \bar{v} \in E'$.
Similarly, 
the line segments corresponding to $v$ and $\bar{v}$ cross if and only if $0 \le 2 \wei(v) \le 2$, i.e., $v \in M$. 
This shows that the obtained permutation diagram represents $G'_{M}$.
Let $\prec_{1}$ be the ordering on $V'$ defined by $\xcor_{1}$.
Since $\xcor_{1}(v) = - \xcor_{1}(\bar{v})$ for each $v \in V$, $\prec_{1}$ is symmetric.
Similarly, the ordering $\prec_{2}$ defined by $\xcor_{2}$ is symmetric.

\begin{figure}[tb]
  \centering
  \includegraphics[scale=.8]{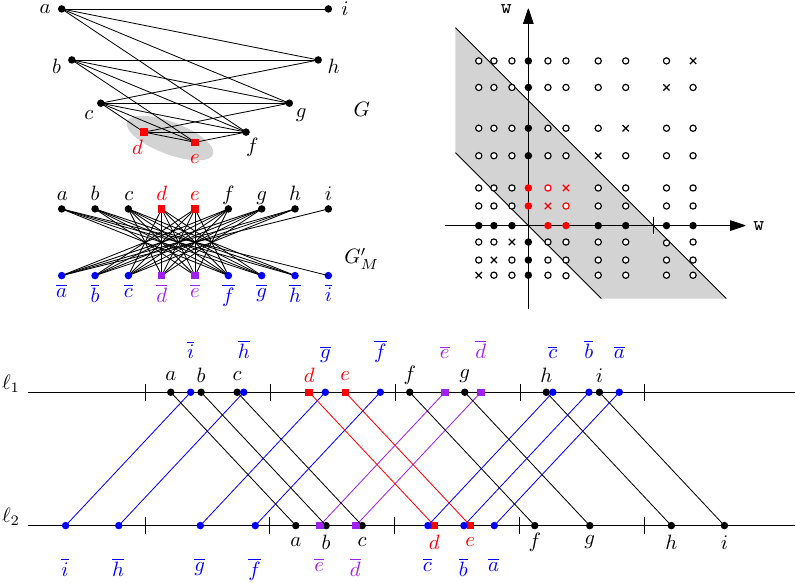}
  \caption{An illustration of (\ref{itm:dtg_with_M} $\implies$ \ref{itm:sym_diagram}) in Lemma \ref{lem:key-lemma}.
  (Top left) A double-threshold graph $G$ with $M=\{d,e\}$. The auxiliary bipartite graph $G'_M$ is also depicted.
  (Top right) A slab representation of $G$.
  (Bottom) A permutation diagram of $G'_M$ as given in the proof.
  }
  \label{fig:lem4-2}
\end{figure}

(\ref{itm:sym_diagram}$\implies$\ref{itm:dtg_with_M})
Suppose we are given a permutation diagram of $G'_{M}$ in which both $\prec_1$ and $\prec_2$ are symmetric.
We may assume by symmetry that the first vertex in $\prec_{1}$ belongs to $V$.
Since $G'_{M}$ is connected by Lemma~\ref{lem:GMconn}, Corollary~\ref{cor:rightanglerep} shows that
we can represent $G'_{M}$ by a permutation diagram in which the $x$-coordinates $\xcor_{1}$ and $\xcor_{2}$ on $\ell_{1}$ and $\ell_{2}$ satisfy that
\begin{equation}
\xcor_2(v)=\xcor_1(v)+1 \quad \mbox{and} \quad \xcor_2(\bar v)=\xcor_1(\bar v)-1  \quad (v\in V) \label{eq:x1x2}
\end{equation}
and that the orderings of the $x$-coordinates on $\ell_{1}$ and $\ell_{2}$ 
are consistent with $\prec_{1}$ and $\prec_{2}$, respectively.
Since $\prec_{1}$ is symmetric, 
if $u, v \in V$ are the $i$th and the $j$th vertices in $\prec_{1}$,
then $\bar{u}, \bar{v}$ are the $(2n-i+1)$st and the $(2n-j+1)$st vertices in $\prec_{1}$.
Since $i < 2n-j+1$ is equivalent to $j < 2n-i+1$,
we have that $u \prec_{1} \bar{v}$ if and only if $v \prec_{1} \bar{u}$.
As $\xcor_{1}$ is consistent with $\prec_{1}$, it holds for $u, v \in V$ that
$\xcor_1(u) \le \xcor_1(\bar v)$ if and only if $\xcor_1(v) \le \xcor_1(\bar u)$,
and hence
\[
  \xcor_1(u) \le \xcor_1(\bar v)
  \iff 
  \xcor_1(u)+\xcor_1(v) \le \xcor_1(\bar v)+\xcor_1(\bar u).
\]
Similarly, we can show that for $u,v \in V$,
\[
  \xcor_2(u) \ge \xcor_2(\bar v)
  \iff
  \xcor_2(u)+\xcor_2(v) \ge \xcor_2(\bar v)+\xcor_2(\bar u).
\]
Thus, for any two distinct vertices $u,v \in V$, it holds that
\begin{align}
  uv \in E
    & \iff u\bar v \in E' \notag \\
    & \iff \xcor_1(u) \le \xcor_1(\bar v) \mbox{ and } \xcor_2(u) \ge \xcor_2(\bar v) \notag \\
    & \iff \xcor_1(u)+\xcor_1(v) \le \xcor_1(\bar v) + \xcor_1(\bar u) \mbox{ and } \xcor_2(u)+\xcor_2(v) \ge \xcor_2(\bar v)+\xcor_2(\bar u). \label{eq:weight01} 
\end{align}
For each $v\in V$, define 
\[
\wei(v)= \frac{\xcor_2(v)-\xcor_2(\bar v)}{2}. 
\]
By (\ref{eq:x1x2}), we can see that (\ref{eq:weight01}) is equivalent to 
\[
0 \le \wei(u)+\wei(v) \le 2,  
\]
which shows that $\wei$, $\lb =0$, and $\ub = 2$ define $G$. 
Furthermore, for any $v\in V$, 
\begin{align*}
    v \in M &\iff v \bar v \in E' \\
            & \iff \xcor_1(v) \le \xcor_1(\bar v) \mbox{ and } \xcor_2(v) \ge \xcor_2(\bar v) \\
            & \iff 0 \le \wei(v) \le 1, 
\end{align*}
which shows that 
$M = \{v \in V \mid 0\le \wei(v) \le 1\}$.  
\end{proof}

To utilize Lemma~\ref{lem:key-lemma}, we need to find the set $M$ of mid-weight vertices;
that is, the vertices with weights in the range $[\lb/2, \ub/2]$.
The first observation is that $M$ has to be a clique
as the weight sum of any two vertices in $M$ is in the range $[\lb, \ub]$.
In the following, we show that an efficient maximum clique can be chosen as $M$.
To this end, we first prove that we only need to consider (inclusion-wise) maximal cliques.
\begin{lemma}
\label{lem:mid-weight_maximal-clique}
For a connected non-bipartite double-threshold graph $G = (V,E)$,
there exist $\wei\colon V \to \R$ and $\lb, \ub \in \R$ defining $G$
such that $\{v \in V \mid \lb/2 \le \wei(v) \le \ub/2\}$ is a maximal clique of $G$.
\end{lemma}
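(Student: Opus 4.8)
The plan is to prove the statement by an extremal argument over all double-threshold representations of $G$. First I would apply Lemma~\ref{lem:nice-representation} to put any representation into the normalized form $\lb=0$, $\ub=2$ with all weights distinct and $\wei(u)+\wei(v)\notin\{0,2\}$ for all $(u,v)\in V^2$; in particular $2\wei(v)\notin\{0,2\}$, so no vertex has weight exactly $0$ or $1$. Under this normalization the mid-set is $M=\{v\in V\mid 0<\wei(v)<1\}$, and it is a clique because any two of its members have weight sum strictly between $0$ and $2$. Since $G$ is non-bipartite, $M\neq\emptyset$: if $M$ were empty the low vertices ($\wei<0$) and the high vertices ($\wei>1$) would form a bipartition of $G$, exactly as in the proof of Lemma~\ref{lem:at-most-one-nonbipartite}. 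Among all representations of $G$ I would then fix one for which $|M|$ is as large as possible (this maximum exists, as $|M|\le n$), put it in normalized form, and claim that its mid-set $M$ is a maximal clique.

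Suppose for contradiction that $M$ is not maximal, so some $z\notin M$ is adjacent to every vertex of $M$. I would first reduce to a convenient form for $z$. The reflection $\wei\mapsto 1-\wei$ preserves the band $[0,2]$ and the edge relation, fixes $M$ setwise, and swaps the low and high sides; so I may assume $z$ is low, $\wei(z)<0$. Next, for a low vertex $u$ and a mid vertex $m$ the upper bound is automatic (their sum is below $1$), so $u$ is adjacent to $m$ iff $\wei(u)+\wei(m)\ge 0$; hence being adjacent to all of $M$ is upward-closed among the low vertices, and I may take $z$ to be the low vertex of largest weight.

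Fix a target weight $w'\in(0,\min\{1,-\wei(z)\})$ at which I would like to place $z$ so that it joins $M$. The one structural fact I would establish is that every high neighbour $u$ of $z$ that blocks this move --- meaning $\wei(u)>2-w'$, so that $\wei(z)+\wei(u)$ would rise above $2$ --- is non-adjacent to all of $M$: from $z\sim m$ we get $\wei(m)>-\wei(z)$ for every mid vertex $m$, and combined with $\wei(u)>2-w'\ge 2+\wei(z)$ this yields $\wei(u)+\wei(m)>2$. Thus the blocking set $B$ lies entirely ``above'' the mid block and is independent from it, which is what makes it geometrically plausible to relocate each vertex of $B$ to a position below the whole mid block, keeping it adjacent to $z$ (now mid) and non-adjacent to $M$. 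The target representation would then have $M\cup\{z\}$, a strictly larger clique, as its mid-set, contradicting maximality of $|M|$.

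The step I expect to be the main obstacle is producing this enlarged representation in full. It is not a local perturbation: raising $\wei(z)$ and pushing $B$ downward generally forces many other weights to be reassigned, because the original and the desired representation sit in different ``chambers'' of the weight space (they are separated by pairs whose sum must cross a bound), and because a vertex of $B$ may have its own low neighbours that would be destroyed when it is pushed below the mid block, triggering a cascade of further relocations. The technical heart of the proof is therefore to exhibit a globally consistent reassignment of all the weights --- equivalently, to reorganize the permutation diagram of Lemma~\ref{lem:DTGisPerm} so that the enlarged crossing family $M\cup\{z\}$ becomes the central block --- and to verify that every adjacency and non-adjacency of $G$ is preserved. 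Once such a representation with a strictly larger mid-set is in hand, the choice of a maximum-$|M|$ representation is contradicted and the lemma follows.
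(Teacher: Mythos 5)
Your proposal stops exactly where the lemma's difficulty begins. You reduce correctly to the following situation: a normalized representation with $\lb=0$, $\ub=2$, an extremal mid-set $M$, a low vertex $z$ of largest weight adjacent to all of $M$, and a blocking set $B$ of high neighbours of $z$ that is non-adjacent to all of $M$. You then propose to relocate $B$ below the mid block and concede that carrying this out is ``the technical heart'' and remains unresolved. That concession is accurate, and it means there is no proof: the relocation is inconsistent as stated. A vertex $u\in B$ can have a low neighbour $x$ (for instance $\wei(z)=-0.1$, $\wei(u)=1.95$, $\wei(x)=-0.5$, $w'<0.1$); once $u$ is pushed below the mid block, both $u$ and $x$ are low, their sum drops below $\lb=0$, and the edge $ux$ is destroyed, forcing $x$ to move as well, and nothing in the proposal bounds or terminates this cascade. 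There is also a smaller oversight in the local analysis: if $\wei(z)<-1$, a high vertex $u$ with $\wei(z)+\wei(u)<0$ is a non-neighbour of $z$ that your move can turn into a neighbour, since $w'+\wei(u)$ may land in $[0,2]$; your constraint $w'\in(0,\min\{1,-\wei(z)\})$ does not exclude this (taking $w'$ close to $1$ would, but this shows the weight-space bookkeeping is more delicate than the proposal suggests). So what you have is a correct reduction plus a correct statement of what remains to be proved, not a proof.

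The paper's argument exists precisely to dispose of this step, and it does so by leaving weight space. By Lemma~\ref{lem:key-lemma}, a representation of $G$ with mid-set $M$ is equivalent to a permutation diagram of the auxiliary graph $G'_M$ in which both orderings are symmetric. After normalizing so that $M\prec_1\overline{M}$, $\overline{M}\prec_2 M$, and $\bar v\prec_1 v$ for the violating vertex $v$ chosen with smallest $\prec_1$-position, the paper shows that $\bar v$ and $v$ occupy the two central positions $w_n,w_{n+1}$ of the symmetric ordering; swapping them is a purely local move that changes exactly one adjacency, adding the edge $v\bar v$, so the new symmetric orderings represent $G'_{M\cup\{v\}}$, and Lemma~\ref{lem:key-lemma} converts this back into a weight function whose mid-set is $M\cup\{v\}$. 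The global reassignment of weights that you could not produce is generated automatically by that correspondence, and the cascade you feared cannot arise in the diagram picture. To complete your argument you would either have to prove an analogue of this central-swap step directly in weight space, or route through the auxiliary graph as the paper does.
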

\begin{proof}
Let $G$ be a non-bipartite double-threshold graph $G = (V,E)$
defined by $\wei\colon V \to \R$ and $\lb, \ub \in \R$.
Let $M = \{v \in V \mid \lb/2 \le \wei(v) \le \ub/2\}$.
We choose $\wei$, $\lb$, and $\ub$ in such a way
that for any $\wei'\colon V \to \R$ and $\lb', \ub' \in \R$ defining $G$,
$M$ is not a proper subset of $\{v \in V \mid \lb'/2 \le \wei'(v) \le \ub'/2\}$.
Suppose to the contrary that $M$ is not a maximal clique of $G$.
Observe that if $M = \emptyset$, then $V$ can be partitioned into two independent sets
$\{v \in V \mid \wei(v) < \lb/2\}$ and $\{v \in V \mid \wei(v) > \ub/2\}$,
which is a contradiction to the non-bipartiteness of $G$.
Hence, $M$ is non-empty.

Let $G'_{M}$ be the auxiliary graph constructed from $G$ and $M$ as before.
By Lemma~\ref{lem:key-lemma}, $G'_{M}$ has a permutation diagram in which both 
${\prec_{1}} = \langle w_{1}, \dots, w_{2n} \rangle$ and
${\prec_{2}} = \langle w'_{1}, \dots, w'_{2n} \rangle$ are symmetric.
Let $\overline{M} = \{\bar{v} \mid v \in M\}$.
By the definition of $G'_{M}$, $M \cup \overline{M}$ induces a complete bipartite graph in $G'_{M}$.
By symmetry, we may assume that $M \prec_{1} \overline{M}$ and $\overline{M} \prec_{2} M$.
That is, in $\prec_{1}$ all vertices in $M$ appear before any vertex in $\overline{M}$ appears,
and in $\prec_{2}$ all vertices in $\overline{M}$ appear before any vertex in $M$ appears.
Note that these assumptions imply that for each edge $x \bar{y} \in E(G'_{M})$,
$x \prec_{1} \bar{y}$ and $\bar{y} \prec_{2} x$ hold
since $G'_{M}$ is connected by Lemma~\ref{lem:GMconn} (see \figref{fig:mid-weight} (Left)).

As $M$ is not a maximal clique in $G$,
there is a vertex $v \notin M$ such that $M \subseteq N_{G}(v)$.
If $\bar{v} \prec_{1} v$, then we have
\begin{align}
  M \prec_{1} \bar{v} \prec_{1} v \prec_{1} \overline{M} \quad \text{ and } \quad
  \bar{v} \prec_{2} \overline{M} \prec_{2} M \prec_{2} v \label{eq:maximal_the-1st-case}
\end{align}
since 
$v \bar{v} \notin E(G'_{M})$, $\overline{M} \subseteq N_{G'_{M}}(v)$, and $M \subseteq N_{G'_{M}}(\bar{v})$.
Similarly, if $v \prec_{1} \bar{v}$, then we have
\begin{align*}
  v \prec_1 M \prec_{1} \overline{M} \prec_{1} \bar{v} \quad \text{ and } \quad
  \overline{M} \prec_{2} v \prec_{2} \bar{v} \prec_{2} M,
\end{align*}
or equivalently,
\begin{align*}
  M \prec_{2}^{R} \bar{v} \prec_{2}^{R} v \prec_{2}^{R} \overline{M} \quad \text{ and } \quad
  \bar{v} \prec_{1}^{R} \overline{M} \prec_{1}^{R} M \prec_{1}^{R} v.
\end{align*}
Thus, by replacing $\prec_{1}$ with $\prec_{2}^{R}$ and $\prec_{2}$ with $\prec_{1}^{R}$ if necessary,
we may assume that \eqref{eq:maximal_the-1st-case} holds (see \figref{fig:mid-weight} (Left)).
We further assume that $v$ has the smallest position in $\prec_{1}$ under these conditions.
\begin{myclaim}
  \label{clm:center}
  $w_{n+1} = v$ (and thus $w_{n} = \bar{v}$).
\end{myclaim}
\begin{proof}
[Claim~\ref{clm:center}]
By the symmetry of $\langle w_{1}, \dots, w_{2n} \rangle$,
it suffices to show that there is no vertex $x \in V$ such that $\bar{v} \prec_{1} x \prec_{1} v$.
Suppose that such a vertex $x$ exists.
In $G'_{M}$, $x$ is not adjacent to $\bar{v}$. This implies that $xv \notin E$, and hence $x \notin M$.
On the other hand, in $G'_{M}$, $x$ is adjacent to all vertices in $\overline{M}$.
Thus, we have $M \subseteq N_{G}(x)$.
This contradicts that $v$ has the smallest position in $\prec_{1}$ under those conditions.
\end{proof}

Now we obtain $\prec'_{1}$ from $\prec_{1}$ by swapping $v$ and $\bar{v}$ (see \figref{fig:mid-weight} (Right)).
By Claim~\ref{clm:center}, this new ordering $\prec'_{1}$ gives (together with $\prec_{2}$)
the graph obtained from $G'_{M}$ by adding the edge $v\bar{v}$.
Observe that this new graph can be expressed as $G'_{M \cup \{v\}}$.
Since $\prec'_{1}$ and $\prec_{2}$ are symmetric, Lemma~\ref{lem:key-lemma} implies that 
there are
$\wei'\colon V \to \R$ and $\lb', \ub' \in \R$ defining $G$
such that $\{u \in V \mid \lb'/2 \le \wei'(u) \le \ub'/2\} = M \cup \{v\}$.
This contradicts the choice of $\wei$, $\lb$, and $\ub$.
\begin{figure}[t]
  \centering
  \includegraphics[scale=.55]{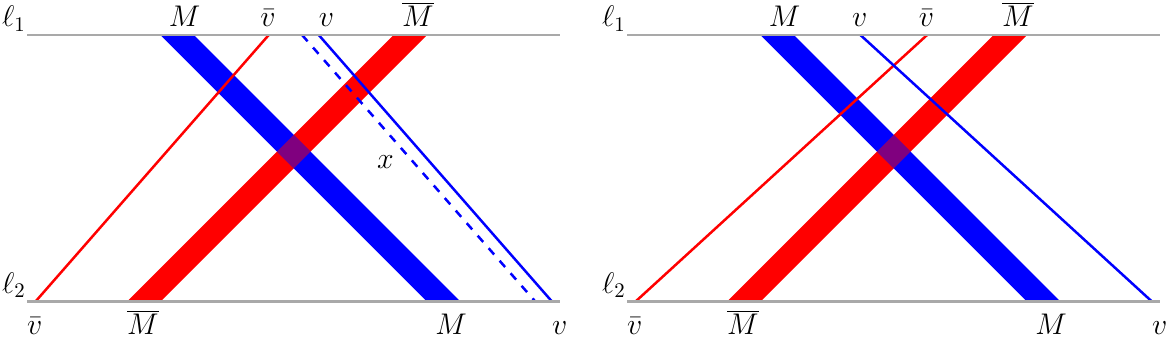}
  \caption{(Left) Relative positions of $v$, $\bar{v}$, $M$, and $\overline{M}$. 
    (Right) $\prec'_{1}$ is obtained from $\prec_{1}$ by swapping $v$ and $\bar{v}$.}
  \label{fig:mid-weight}
\end{figure}
\end{proof}

We show that every efficient maximum clique can be the set of mid-weight vertices,
given an appropriate choice of $\wei$, $\lb$, and $\ub$.

\begin{lemma}
\label{lem:efficient-max-clique}
Let $G$ be a non-bipartite double-threshold graph.
For every efficient maximum clique $K$ of $G$,
there exist $\wei\colon V \to \R$ and $\lb, \ub \in \R$ defining $G$ such that 
$K = \{v \in V \mid \lb/2 \le \wei(v) \le \ub/2\}$.
\end{lemma}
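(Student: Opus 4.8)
The plan is to bootstrap from the weighting guaranteed by Lemma~\ref{lem:mid-weight_maximal-clique} and to realize an arbitrary efficient maximum clique by a controlled re-weighting. By Corollary~\ref{cor:component} I may assume $G$ is connected, and by Lemma~\ref{lem:nice-representation} I fix a weighting with $\lb=0$, $\ub=2$, distinct weights, and no weight sum equal to $0$ or $2$, whose mid-weight set $M=\{v: 0<\wei(v)<1\}$ is a maximal clique. Setting $L=\{v:\wei(v)<0\}$ and $H=\{v:\wei(v)>1\}$, two vertices of $L$ (or of $H$) can never sum into $(0,2)$, so $L$ and $H$ are independent, while $M$ is a clique; hence every clique meets each of $L$ and $H$ in at most one vertex. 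I would first use this to show that $M$ is a maximum clique and, crucially, that every maximum clique $K$ contains the common core $M\setminus\{m^-,m^+\}$, where $m^-,m^+$ are the minimum- and maximum-weight vertices of $M$. Concretely, $K$ is obtained from $M$ by optionally replacing $m^-$ by a vertex $a\in L$ adjacent to $M\setminus\{m^-\}$ and/or replacing $m^+$ by a vertex $b\in H$ adjacent to $M\setminus\{m^+\}$.

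Because all maximum cliques share the core, their degree sums differ only in the contributions of the chosen low- and high-extreme. The next step is to show that efficiency forces each chosen extreme to be degree-minimal among the admissible candidates on its side, so that for an efficient $K=(M\setminus A)\cup B$ every swapped-in vertex has degree at most that of the vertex it replaces. This per-coordinate degree inequality is the bridge to the geometric construction.

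To realize $K$, I would transfer the representation from $M$ to $K$ using Lemma~\ref{lem:key-lemma}: starting from a symmetric permutation diagram of $G'_M$, I perform a surgery that uncrosses the diagonal pairs $\{m^-,\overline{m^-}\}$, $\{m^+,\overline{m^+}\}$ and crosses the new diagonals $\{a,\bar a\}$, $\{b,\bar b\}$, exactly mirroring the single swap of $v$ and $\bar v$ used in the proof of Lemma~\ref{lem:mid-weight_maximal-clique}. The degree inequality is what guarantees that the segments admit this exchange while keeping both orderings symmetric and still representing $G'_K$; equivalently, in weight language, it lets me push $a$ (resp.\ $b$) into the mid-range $(0,1)$ and push $m^-$ (resp.\ $m^+$) out of it without creating or destroying any other adjacency. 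Lemma~\ref{lem:key-lemma} then converts the resulting symmetric diagram into weights defining $G$ whose mid-weight set is exactly $K$.

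The hardest part will be the middle step: converting the global optimality statement ``$K$ is an efficient maximum clique'' into the local, per-coordinate degree data that the re-weighting consumes. The subtlety is that the low-side and high-side replacements are coupled, since the swapped-in vertices $a$ and $b$ must themselves be adjacent, so I must verify that a minimizer of the total degree sum is still forced to be degree-minimal separately on each side; I also have to dispose of the degenerate cases where $|M|\le 1$ or where no admissible low/high swap exists. Once the degree inequality is in hand, checking that the surgery (equivalently, the re-weighting) preserves every edge and non-edge is a routine but careful case analysis over the pairs involving the moved vertices.
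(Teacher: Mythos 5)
Your setup is sound and, up to the decisive step, parallels the paper's own argument: starting from Lemma~\ref{lem:mid-weight_maximal-clique} (normalized via Lemma~\ref{lem:nice-representation}), both your classification of maximum cliques as $K=(M\setminus A)\cup B$ with $A\subseteq\{m^{-},m^{+}\}$ and your per-side degree minimality $\deg_G(a)\le\deg_G(m^{-})$, $\deg_G(b)\le\deg_G(m^{+})$ (obtained by comparing $K$ against the intermediate maximum cliques) are correct and provable. The gap is in the step you call the bridge: a degree \emph{inequality} cannot justify the exchange. Swapping the weights (equivalently, the diagram positions) of the nonadjacent pair $a,m^{-}$ yields a representation of the same graph if and only if $N_G(a)=N_G(m^{-})$; this is a set equality, and no cardinality bound can deliver it. From what you have established, $a$ and $m^{-}$ could a priori have incomparable neighborhoods --- say $a$ adjacent to some $x$ with $\wei(x)>\ub/2$ that $m^{-}$ misses, and $m^{-}$ adjacent to some $y$ that $a$ misses --- and then your surgery changes adjacencies other than $a\bar a$ and $m^{-}\overline{m^{-}}$, so the resulting diagram no longer represents $G'_K$. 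Note also that the move you cite as a model, the transposition of $v$ and $\bar v$ in the proof of Lemma~\ref{lem:mid-weight_maximal-clique}, is legitimate only because Claim~\ref{clm:center} shows that $\bar v$ and $v$ occupy the two middle, consecutive positions, so the transposition alters exactly that one pair; your uncross/cross operation is not local in this sense, and nothing in your argument controls which other crossings it affects.

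What is missing is precisely the containment $N_G(m^{-})\subseteq N_G(a)$, and proving it is the combinatorial heart of the paper's proof: Claims~\ref{clm:neighborhood-in-M}, \ref{clm:neighborhood-in-M2}, and \ref{clm:neighborhood-in-L} establish --- by sandwich arguments in the weight representation, not by clique structure or degree counting --- that $m^{-}$ has no neighbor of weight below $\lb/2$, that $N_G(a)\cap M=N_G(m^{-})\cap M$, and that every neighbor of $m^{-}$ of weight above $\ub/2$ is also a neighbor of $a$. Only after this containment does efficiency enter, and in the opposite role to the one you assign it: if the containment were proper, replacing $a$ by $m^{-}$ in $K$ would produce a maximum clique of strictly smaller degree sum, contradicting efficiency; hence $N_G(a)=N_G(m^{-})$, and a direct weight swap (no diagram surgery is needed) finishes the proof. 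In short, your efficiency step supplies only the easy inequality; the containment, which is where all the work lies, is still owed, and without it the re-weighting step is a non sequitur.
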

\begin{proof}
Let $K$ be an efficient maximum clique of $G$.
By Lemma~\ref{lem:DTGisPerm}, $G$ is a permutation graph, and thus cannot contain 
an induced odd cycle of length $5$ or more~\cite{Golumbic04}.
As $G$ is non-bipartite, $G$ contains $K_{3}$. This implies that $|K| \ge 3$.

By Lemma~\ref{lem:mid-weight_maximal-clique},
there exist $\wei\colon V \to \R$ and $\lb, \ub \in \R$ defining $G$
such that $M := \{v \in V \mid \lb/2 \le \wei(v) \le \ub/2\}$ is a maximal clique of $G$.
Assume that $\wei$, $\lb$, and $\ub$ are chosen so that the size of the symmetric difference
$|M \vartriangle K| = |M \setminus K| + |K \setminus M|$ is minimized.
Assume that $K \ne M$ since otherwise we are done.
This implies that $K \not\subseteq M$ and $K \not\supseteq M$ as both $K$ and $M$ are maximal cliques.
Observe that $G - M$ is bipartite.
This implies that $|K \setminus M| \in \{1,2\}$ and that $K \cap M \ne \emptyset$ as $|K| \ge 3$.
Since $K$ is a maximum clique, $|M \setminus K| \le |K \setminus M|$ holds.

Let $u \in K \setminus M$. By symmetry, we may assume that $\wei(u) < \lb/2$.
Note that no other vertex in $K$ has weight less than $\lb/2$ as $K$ is a clique.
Let $v \in M$ be a non-neighbor of $u$ that has the minimum weight among such vertices.
Such a vertex exists since $M$ is a maximal clique. Note that $v \in M \setminus K$.

We now observe that $v$ has the minimum weight in $M$.
If $w \in M$ is a non-neighbor of $u$, then $\wei(v) \le \wei(w)$ follows from the definition of $v$.
If $w \in M$ is a neighbor of $u$, then $\wei(v) < \wei(w)$ holds,
since otherwise $\wei(u) < \lb/2 \le \wei(w) \le \wei(v)$ and $uw, wv \in E$ imply that $uv \in E$ 
by Observation~\ref{obs:sandwich}.

We are going to show that $N(v) = N(u)$.

\begin{myclaim}
  \label{clm:neighborhood-in-M}
  $N(u) \cap \{x \mid \wei(x) < \lb/2\} = N(v) \cap \{x \mid \wei(x) < \lb/2\}  = \emptyset$.
\end{myclaim}
\begin{proof}
[Claim~\ref{clm:neighborhood-in-M}]
Since $\wei(u) < \lb/2$, $N(u) \cap \{x \mid \wei(x) < \lb/2\} = \emptyset$.
Suppose to the contrary that $v$ has a neighbor $x$ with $\wei(x) < \lb/2$.
The maximality of $M$ implies that $x$ has a non-neighbor $y \in M$.
Since $y \in M$, $\wei(v) \le \wei(y)$ holds.
However,
$\wei(x) < \lb/2 \le \wei(v) \le \wei(y)$ and $xv, vy \in E$ imply $xy \in E$ by Observation~\ref{obs:sandwich}.
\end{proof}

\begin{myclaim}
  \label{clm:neighborhood-in-M2}
  $N(u) \cap M = N(v) \cap M = M \setminus \{v\}$.
\end{myclaim}
\begin{proof}
[Claim~\ref{clm:neighborhood-in-M2}]
Since $M$ is a clique and $v \in M$, we have $N(v) \cap M = M \setminus \{v\}$.
Thus, the claim is equivalent to $M \setminus \{v\} \subseteq N(u)$.
This holds if $M \setminus K = \{v\}$.
Assume that $M \setminus K = \{v, v'\}$ for some $v' \ne v$.
To show the claim, it suffices to show that $uv' \in E$.

Since $|M \setminus K| \le |K \setminus M| \le 2$,
we have $K \setminus M = \{u,u'\}$ for some $u' \ne u$.
Since $\wei(u) < \lb/2$ and $uu' \in E$, we have $\wei(u') \ge \lb/2$. Moreover since $u' \notin M$, we have $\wei(u') > \ub/2$.
Let $w \in M \cap K$. If $\wei(w) > \wei(v')$,
then, by Observation~\ref{obs:sandwich}, we have $u'v, u'v' \in E$
since $\wei(v) \le \wei(v') < \wei(w) \le \wei(u')$
and $vw, v'w, wu' \in E$.
This implies that $M \subseteq N(u')$, which contradicts the maximality of $M$.
Hence, $\wei(w) \le \wei(v')$ holds.
This implies by Observation~\ref{obs:sandwich}
that $uv' \in E$ as $\wei(u) \le \wei(w) \le \wei(v')$ and $uw, wv' \in E$.
\end{proof}

\begin{myclaim}
  \label{clm:neighborhood-in-L}
  $N(u) \cap \{x \mid \wei(x) > \ub/2\} \supseteq N(v) \cap \{x \mid \wei(x) > \ub/2\}$.
\end{myclaim}
\begin{proof}
[Claim~\ref{clm:neighborhood-in-L}]
Let $w \in K \cap M$. For $z \in N(v)$ with $\wei(z) > \ub/2$, we have
\[
  \lb \le \wei(u) + \wei(w) \le \wei(u) + \ub/2 < \wei(u) + \wei(z) < \lb/2 + \wei(z) < \wei(v) + \wei(z) \le \ub,
\]
and thus $z \in N(u)$ holds.
\end{proof}

Claims~\ref{clm:neighborhood-in-M}, \ref{clm:neighborhood-in-M2}, and \ref{clm:neighborhood-in-L} imply that $N(v) \subseteq N(u)$.
To show that $N(v) = N(u)$, suppose to the contrary that 
$N(v)$ is a proper subset of $N(u)$.
We show that $K$ cannot be an efficient maximum clique in this case.
Let $K' = K \setminus \{u\} \cup \{v\}$. We first argue that $K'$ is a maximum clique.
To this end, it suffices to show that $K'$ is a clique as $|K'| = |K|$.
If $K \setminus M = \{u\}$, then $K' = M$ is a clique.
Assume that $K \setminus M = \{u, u'\}$ for some $u' \ne u$.
Since $\wei(u) < \lb/2$ and $u' \in K \setminus M$, we have $\wei(u') > \ub/2$ as before.
Let $w \in K \cap M$. Then, $vw, wu' \in E$.
Since $\wei(v) \le \wei(w) \le \ub/2 < \wei(u')$, we have $vu' \in E$ by Observation~\ref{obs:sandwich}.
Thus, $K'$ is a clique.
The assumption $N(v) \subsetneq N(u)$ implies that $\deg_{G}(v) < \deg_{G}(u)$, and thus,
\[
 \sum_{w \in K'} \deg_{G}(w) = \left(\sum_{w \in K} \deg_{G}(w)\right) - \deg_{G}(u) + \deg_{G}(v) < \sum_{w \in K} \deg_{G}(w).
\]
This contradicts that $K$ is efficient.
Therefore, we conclude that $N(v) = N(u)$.

Now, we define a weight function $\wei'\colon V \to \R$ by setting
$\wei'(u) = \wei(v)$, $\wei'(v) = \wei(u)$, and $\wei'(x) = \wei(x)$ for all $x \in V \setminus \{u,v\}$.
Then, $\wei'$, $\lb$, and $\ub$ define $G$ and 
$M' := \{w \in V \mid \lb/2 \le \wei'(w) \le \ub/2\} = M \cup \{u\} \setminus \{v\}$ as $N(u) = N(v)$.
This contradicts the choice of $\wei$ as $|M' \vartriangle K| < |M \vartriangle K|$.
\end{proof}

Next, we show that the symmetry required in Lemma~\ref{lem:key-lemma} follows for free
when $M$ is a clique.

\begin{lemma}
\label{lem:perm=sym}
Let $G = (V,E)$ be a connected non-bipartite graph and $M$ be a clique of $G$. 
Then, $G'_{M}$ is a permutation graph
if and only if
$G'_{M}$ can be represented by a permutation diagram in which both orderings $\prec_1$ and $\prec_2$ are symmetric.
\end{lemma}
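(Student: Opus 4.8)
The reverse implication is immediate: a permutation diagram in which $\prec_1$ and $\prec_2$ are symmetric is in particular a permutation diagram, so $G'_M$ is a permutation graph. The whole content is the forward implication, and my plan is to exploit the natural involution. Let $\sigma\colon V'\to V'$ be the map exchanging $v$ and $\overline v$ for every $v\in V$. Because the relation ``$uv\in E$'' is symmetric and each loop edge $v\overline v$ (for $v\in M$) is fixed by $\sigma$, the map $\sigma$ is an automorphism of $G'_M$; moreover it is a fixed-point-free involution that exchanges the two sides $V$ and $\overline{V}$ of the bipartition. A linear ordering of $V'$ is symmetric precisely when $\sigma$ sends it to its own reverse, so the goal is to produce a permutation representation of $G'_M$ on which $\sigma$ acts as a reflection. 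Since $G$ is connected and non-bipartite, $G'_M$ is connected by Lemma~\ref{lem:GMconn}, so the essential-uniqueness machinery of Lemma~\ref{lem:uniquerep} is available.

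To make the representation rigid I would first contract false twins. Two vertices with equal neighborhoods necessarily lie on the same side of the bipartition, since equal neighborhoods on opposite sides would both have to be empty, impossible as $G'_M$ is connected with more than one vertex. Identifying them yields a connected bipartite permutation graph $H$ with pairwise distinct neighborhoods; as $\sigma$ maps equal-neighborhood classes to equal-neighborhood classes, it descends to a fixed-point-free involutive automorphism $\hat\sigma$ of $H$ that still swaps the two parts. By Proposition~\ref{prop:bpg=uib} and Lemma~\ref{lem:rightanglerep}, I represent $H$ by unit intervals $I_{\hat w}=[l_{\hat w},l_{\hat w}+1]$ with pairwise distinct endpoints.

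The key step is the following dichotomy. The mirrored family $l'_{\hat w}:=-\,l_{\hat\sigma(\hat w)}$ is again a unit-interval representation of $H$, because $\hat\sigma$ is an automorphism and the intersection pattern across the two parts depends only on the values $|l_{\hat\sigma(\hat w)}-l_{\hat\sigma(\hat w')}|$. Since $H$ is connected with distinct neighborhoods, Lemma~\ref{lem:uniquerep} says $H$ has an essentially unique representation, namely one up to reversal, hence up to a reflection and a translation of the endpoints. Thus either $l'_{\hat w}=l_{\hat w}+c$ for all $\hat w$, or $l'_{\hat w}=-l_{\hat w}+c$ for all $\hat w$. In the first case $l_{\hat\sigma(\hat w)}=-l_{\hat w}-c$, i.e.\ $\hat\sigma$ acts on the representation as a reflection, which is exactly what I want. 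The second case gives $l_{\hat\sigma(\hat w)}=l_{\hat w}-c$; applying $\hat\sigma$ once more and using $\hat\sigma^2=\mathrm{id}$ forces $c=0$ and then $l_{\hat\sigma(\hat w)}=l_{\hat w}$, contradicting that $\hat\sigma$ is fixed-point-free and that the endpoints are distinct. Ruling out this second, reflection-type alignment is the main obstacle, and it is precisely fixed-point-freeness of the involution that kills it.

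It remains to lift the symmetric representation of $H$ back to $G'_M$ and to translate it into a symmetric permutation diagram. I would blow up each contracted twin class by perturbing its common endpoint by distinct infinitesimals, assigning to the mirror class the negated perturbations, so that $l_{\sigma(w)}=-l_w-c$ persists vertex by vertex; for small enough perturbations no cross-side intersection changes, so this represents $G'_M$ and is $\sigma$-symmetric. Finally, reading off the two coordinate orders via Lemma~\ref{lem:rightanglerep} (with $\xcor_2=\xcor_1\pm 1$ on the two sides), the relation $l_{\sigma(w)}=-l_w-c$ makes $\xcor_1(\sigma(w))$ and $\xcor_2(\sigma(w))$ reflections of $\xcor_1(w)$ and $\xcor_2(w)$; hence $\sigma$ reverses each of $\prec_1$ and $\prec_2$, i.e.\ both are symmetric. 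I note that the clique hypothesis on $M$ is not actually needed for this direction — only the connectivity and non-bipartiteness of $G$, through Lemma~\ref{lem:GMconn} — and that the twin bookkeeping is routine, the genuine difficulty being the reflection-versus-translation dichotomy above.
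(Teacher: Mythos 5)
Your overall strategy --- reduce to the twin-free case, then exploit the involution $\sigma$ together with the essential uniqueness of the representation --- is in spirit the same as the paper's proof (your $\sigma$ is the paper's automorphism $\varphi$), and the twin reduction and the descent of $\sigma$ to $\hat\sigma$ are fine. The fatal step is the ``reflection-versus-translation dichotomy.'' Lemma~\ref{lem:uniquerep} does not say that a twin-free connected bipartite permutation graph has a representation that is unique up to reversal, and it says nothing whatsoever about endpoint \emph{values}: it constrains only the \emph{orderings}, up to neighborhood equivalence, and it allows \emph{four} of them, $\prec_{1}$, $\prec_{2}$, $\prec_{1}^{\mathrm{R}}$, $\prec_{2}^{\mathrm{R}}$ --- i.e., besides reversal, the roles of the two lines of the diagram may also be exchanged. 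Consequently you may not conclude that $l'_{\hat w}=l_{\hat w}+c$ for all $\hat w$ or $l'_{\hat w}=-l_{\hat w}+c$ for all $\hat w$; and your elimination of the second case relies on these exact affine identities (to force $c=0$ and $l_{\hat\sigma(\hat w)}=l_{\hat w}$), which are simply not available. The possibility you lose is precisely the one where $\sigma$ carries the diagram to the diagram with the two lines swapped --- in the paper's notation, the case $\varphi(\prec_{1})=\prec_{2}$ --- and that case \emph{cannot} be excluded by fixed-point-freeness of $\sigma$; the paper excludes it only by invoking the hypothesis that $M$ is a clique (there, $\varphi(\prec_{1})=\prec_{2}$ forces $M=V$, hence $G$ complete and $G'_{M}$ complete bipartite, contradicting twin-freeness).

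The gap is not cosmetic, because your closing claim that the clique hypothesis is unnecessary is false, so no argument that never uses it can be correct. Take $G$ to be the paw (a triangle $a,b,c$ plus a pendant vertex $d$ adjacent to $a$) and $M=V$. Then $G$ is connected and non-bipartite, and $M$ is not a clique since $bd\notin E$. In $G'_{M}$ we have $N(b)=N(c)=\{\bar a,\bar b,\bar c\}$, and contracting this twin pair leaves the domino (two $4$-cycles sharing the edge $a\bar a$), which is a unit interval bigraph (e.g.\ left endpoints $l_{\bar b}=0$, $l_{b}=0.1$, $l_{a}=0.9$, $l_{\bar a}=1.0$, $l_{d}=1.8$, $l_{\bar d}=1.85$) and hence a bipartite permutation graph by Proposition~\ref{prop:bpg=uib}; re-inserting the twins keeps it a permutation graph. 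So $G'_{M}$ is a permutation graph. Yet $G'_{M}$ admits no permutation diagram in which both orderings are symmetric: by Lemma~\ref{lem:key-lemma} (whose implication from symmetric diagrams to weights needs only that $G$ be non-bipartite, via Lemma~\ref{lem:GMconn}), such a diagram would yield $\wei$, $\lb$, $\ub$ defining $G$ with $\{v \mid \lb/2 \le \wei(v) \le \ub/2\}=V$, whence $\lb \le \wei(u)+\wei(v) \le \ub$ for every pair of distinct vertices and $G$ would be complete --- but the paw is not. So the statement you set out to prove is false without the clique assumption, and your proof breaks exactly at the dichotomy step above.
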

\begin{proof}
The if part is trivial.
To prove the only-if part, we assume that $G'_{M}$ is a permutation graph.

First we observe that we only need to deal with the \emph{twin-free} case.
Assume that $N_{G'_{M}}(u) = N_{G'_{M}}(v)$ (or equivalently $N_{G'_{M}}(\bar{u}) = N_{G'_{M}}(\bar{v})$) for some $u, v \in V$, i.e., $u, v$ are twins in $G'_{M}$.
If $G'_{M} - \{v,\bar{v}\}$ has a permutation diagram in which both permutation orderings $\prec_{1}$ and $\prec_{2}$ are symmetric,
then we can obtain symmetric permutation orderings $\prec'_{1}$ and $\prec'_{2}$ of $G'_{M}$
by inserting $v$ right after $u$, and $\bar{v}$ right before $\bar{u}$ in both $\prec_{1}$ and $\prec_{2}$.
Thus, it suffices to show that $G'_{M} - \{v,\bar{v}\} = (G - v)'_{M \setminus \{v\}}$
has a permutation diagram in which both permutation orderings $\prec_{1}$ and $\prec_{2}$ are symmetric.

Observe that $G-v$ might be bipartite, but $(G - v)'_{M \setminus \{v\}}$ is still connected.
Hence, we can assume in the following 
that no pair of vertices in $G'_{M}$ have the same neighborhood
and that $G'_{M}$ is connected (but $G$ might be bipartite).
We also assume that $|V| \ge 2$ since otherwise the statement is trivially true.

Let $\prec_{1}$ and $\prec_{2}$ be the permutation orderings corresponding to a permutation diagram of $G'_{M}$.
By Lemma~\ref{lem:uniquerep}, the assumption of having no twins implies that
$\prec_{1}$, $\prec_{2}$, $\prec_{1}^{\mathrm{R}}$, and $\prec_{2}^{\mathrm{R}}$ are 
all the permutation orderings of $G'_{M}$.
Since $G'_{M}$ is connected,
we may assume that the first vertex in $\prec_{1}$ belongs to $V$,
the last  in $\prec_{1}$ belongs to $V' \setminus V$,
the first in $\prec_{2}$ belongs to $V' \setminus V$,
and the last vertex in $\prec_{2}$ belongs to $V$.
Let $\langle w_{1}, \dots, w_{2n} \rangle$ be the ordering defined by $\prec_{1}$.

Let $\varphi \colon V' \to V'$ be a map such that 
$\varphi(v) = \bar{v}$ and $\varphi(\bar{v}) = v$ for each $v \in V$.
This map $\varphi$ is an automorphism of $G'_{M}$.
Thus, $\langle \varphi(w_{1}), \dots, \varphi(w_{2n}) \rangle$ is also a permutation ordering of $G'_{M}$.
Let ${\prec'} = \langle \varphi(w_{1}), \dots, \varphi(w_{2n}) \rangle$ denote this ordering.
Then, 
\[
 {\prec'} \in \{\prec_{1}, \prec_{2}, \prec_{1}^{\mathrm{R}}, \prec_{2}^{\mathrm{R}}\}.
\]
We claim that ${\prec'} = {\prec_{1}^{\mathrm{R}}}$.
First, observe that ${\prec'} \notin \{\prec_{1},\prec_{2}^{\mathrm{R}}\}$ as
the first vertex of $\prec'$ belongs to $V' \setminus V$
but the first vertices of $\prec_{1}$ and $\prec_{2}^{\mathrm{R}}$ belong to $V$.

Suppose to the contrary that ${\prec'} = {\prec_{2}}$.
Then, for each $w \in V'$, the positions of $w$ in $\prec_{1}$ and $\varphi(w)$ in ${\prec_{2}}$ ($= {\prec'}$) are the same.
Thus, $w_{i} \prec_{1} \varphi(w_{i})$ implies $\varphi(w_{i}) \prec_{2} \varphi(\varphi(w_{i})) = w_{i}$.
Hence, we have $v \bar{v} \in E(G'_{M})$ for all $v \in V$, and thus $M = V$.
As $M$ is a clique, $M = V$ implies that $G$ is a complete graph $K_{|V|}$ and that $G'_{M}$ is a complete bipartite graph $K_{|V|, |V|}$.
This contradicts the assumption that $G'_{M}$ has no twins as $|V| \ge 2$.
Therefore, we conclude that ${\prec'} = {\prec_{1}^{\mathrm{R}}}$,
and in particular that $\varphi(w_{i}) = w_{2n-i+1}$ for each $i$.
This means that $w_{i} = v$ implies $w_{2n-i+1} = \bar{v}$ for all $v \in V$ and $i \in \{1,\dots,2n\}$.
Hence, $\prec_{1}$ is symmetric.
\end{proof}

Now we can prove Theorem~\ref{thm:non-bipartite} restated below.

\newcharacterization*
\begin{proof}
To show that \ref{itm:chara_dtg}$\implies$\ref{itm:every_emc},
assume that $G$ is a non-bipartite double-threshold graph.
Let $M$ be an efficient maximum clique of $G$.
By Lemma~\ref{lem:efficient-max-clique},
there exist $\wei\colon V \to \R$ and $\lb, \ub \in \R$ defining $G$ such that $M = \{v \in V \mid \lb/2 \le \wei(v) \le \ub/2\}$.
Now by Lemma~\ref{lem:key-lemma},
$G'_{M}$ is a bipartite permutation graph.

\smallskip

The implication \ref{itm:every_emc}$\implies$\ref{itm:some_emc} is trivial.

\smallskip

We now show that \ref{itm:some_emc}$\implies$\ref{itm:chara_dtg}.
Assume that for an efficient maximum clique $M$ of a non-bipartite graph $G$, the graph $G'_{M}$ is a bipartite permutation graph.

Let $H$ be a non-bipartite component of $G$.
Then, $H$ contains an induced odd cycle of length $k \ge 3$.
This means that, if $H$ does not contain $M$, 
then $G'_{M}$ contains an induced cycle of length $2k \ge 6$.
However, this is a contradiction as a permutation graph cannot contain
an induced cycle of length at least $5$~\cite{Gallai67}.
Thus, $H$ contains $M$. Also, there is no other non-bipartite component in $G$ as it does not intersect $M$.
Since $H$ contains $M$, $H'_{M}$ is a component of $G'_{M}$.
By Lemma~\ref{lem:perm=sym},
$H'_{M}$ can be represented by a permutation diagram in which both $\prec_1$ and $\prec_2$ are symmetric,
and thus $H$ is a double-threshold graph by Lemma~\ref{lem:key-lemma}.

Let $B$ be a bipartite component of $G$ (if one exists).
Since $B$ does not intersect $M$, $G'_{M}$ contains two isomorphic copies of $B$ as components.
Since $G'_{M}$ is a permutation graph, $B$ is a permutation graph too.
By Lemma~\ref{lem:bipartite}, $B$ is a double-threshold graph.

Now we know that all components of $G$ are double-threshold graphs
and exactly one of them is non-bipartite.
By Lemma~\ref{lem:bi-and-nonbi-components}, $G$ is a double-threshold graph.
\end{proof}


\section{Linear-time recognition algorithm}
\label{sec:algorithm}

We now present a linear-time recognition algorithm for double-threshold graphs.
\begin{theorem}
\label{thm:linear-recognition}
There is an $O(m+n)$-time algorithm that
accepts a given graph $G = (V,E)$ if and only if the graph is a double-threshold graph,
where $n = |V|$ and $m = |E|$.
\end{theorem}
\begin{proof}
Given a graph $G$, we accept $G$ if and only if 
\begin{itemize}
  \item $G$ is a bipartite permutation graph, or
  \item $G$ is a non-bipartite permutation graph and $G'_{M}$ is a permutation graph, where $M$ is an efficient maximum clique of $G$.
\end{itemize}
By Lemma~\ref{lem:bipartite} and Theorem~\ref{thm:non-bipartite}, this algorithm is correct.
Thus, it suffices to present a linear-time implementation of this algorithm.

We first test whether $G$ is a permutation graph in $O(m+n)$ time~\cite{McConnellS99}.
If $G$ is not a permutation graph, we can reject it by Lemma~\ref{lem:DTGisPerm}.
Otherwise, we check in linear time whether $G$ is bipartite.
If so, we can accept $G$ by Lemma~\ref{lem:bipartite}.

In the remaining case, $G$ is a non-bipartite permutation graph.
Assume for now that we already have an efficient maximum clique $M$ of $G$.
Since $|V(G'_{M})| = 2n$ and $|E(G'_{M})| = 2m + |M|$,
we can construct $G'_{M}$ and test whether it is a permutation graph in $O(m+n)$ time.
Hence, by Theorem~\ref{thm:non-bipartite}, it suffices to show that $M$ can be found in $O(m+n)$ time.

To find an efficient maximum clique of $G$,
we set to each vertex $v \in V$ the weight $f(v) = n^{2} - \deg_{G}(v)$,
and then find a maximum-weight clique of $G$ with respect to $f$.
%
It is known that a transitive orientation of a permutation graph can be computed in $O(m+n)$ time~\cite{McConnellS99},
and then using the orientation, we can find a maximum-weight clique $M$ in $O(m+n)$ time~\cite[pp.~133--134]{Golumbic04}.
We show that $M$ is an efficient maximum clique of $G$.
Let $K$ be an efficient maximum clique of $G$.
Since $\sum_{v \in K}f(v) \le \sum_{v \in M} f(v)$, we have
\begin{equation}
  |K| \cdot n^{2} - \sum_{v \in K} \deg_{G}(v) \le |M| \cdot n^{2} - \sum_{v \in M} \deg_{G}(v). \label{eq:efficient-max-clique}
\end{equation}
Since $0 \le \sum_{v \in S} \deg_{G}(v) < n^{2}$ for any $S \subseteq V$,
it holds that $  |K| \cdot n^{2} - n^{2} < |M| \cdot n^{2}$.
This implies that $|K| = |M|$ as $|K| \ge |M|$.
It follows from \eqref{eq:efficient-max-clique} that 
$\sum_{v \in K} \deg_{G}(v) \ge \sum_{v \in M} \deg_{G}(v)$.
Therefore, $M$ is an efficient maximum clique.
\end{proof}


\section{Conclusion}
\label{sec:conclusion}

We have presented a new characterization of double-threshold graphs
and a linear-time recognition algorithm for them based on the characterization.
For a better understanding of this graph class,
it would be good to have the list of minimal forbidden induced subgraphs.
We believe that our characterization will be useful for this direction as well.

\begin{acknowledgements}
The authors are grateful to Robert E. Jamison and Alan P. Sprague for sharing the man\-u\-script of their papers~\cite{JamisonS21,JamisonS20}.
The authors would also like to thank Martin Milani\v{c}, Gregory J. Puleo, and Vaidy Sivaraman for useful information about related papers.
The authors thank the anonymous reviewers for their constructive comments that considerably improved the presentation.
\end{acknowledgements}


\bibliographystyle{plainurl}
\bibliography{dtg}

\begin{thebibliography}{10}

\bibitem{Barrus18}
Michael~D. Barrus.
\newblock Weakly threshold graphs.
\newblock {\em Discrete Mathematics {\&} Theoretical Computer Science}, 20(1),
  2018.
\newblock \href {https://doi.org/10.23638/DMTCS-20-1-15}
  {\path{doi:10.23638/DMTCS-20-1-15}}.

\bibitem{BehrSZ18}
Richard Behr, Vaidy Sivaraman, and Thomas Zaslavsky.
\newblock Mock threshold graphs.
\newblock {\em Discrete Mathematics}, 341(8):2159--2178, 2018.
\newblock \href {https://doi.org/10.1016/j.disc.2018.04.023}
  {\path{doi:10.1016/j.disc.2018.04.023}}.

\bibitem{BenzakenHW85}
Claude Benzaken, Peter~L. Hammer, and Dominique de~Werra.
\newblock Threshold characterization of graphs with dilworth number two.
\newblock {\em Journal of Graph Theory}, 9(2):245--267, 1985.
\newblock \href {https://doi.org/10.1002/jgt.3190090207}
  {\path{doi:10.1002/jgt.3190090207}}.

\bibitem{CalamoneriS16}
Tiziana Calamoneri and Blerina Sinaimeri.
\newblock Pairwise compatibility graphs: {A} survey.
\newblock {\em {SIAM} Review}, 58(3):445--460, 2016.
\newblock \href {https://doi.org/10.1137/140978053}
  {\path{doi:10.1137/140978053}}.

\bibitem{Gallai67}
Tibor Gallai.
\newblock Transitiv orientierbare graphen.
\newblock {\em Acta Mathematica Academiae Scientiarum Hungaricae},
  18(1-2):25--66, 1967.
\newblock \href {https://doi.org/10.1007/bf02020961}
  {\path{doi:10.1007/bf02020961}}.

\bibitem{Golumbic04}
Martin~Charles Golumbic.
\newblock {\em Algorithmic Graph Theory and Perfect Graphs}.
\newblock North Holland, second edition, 2004.

\bibitem{HamburgerMPSX18}
Peter Hamburger, Ross~M. McConnell, Attila P{\'{o}}r, Jeremy~P. Spinrad, and
  Zhisheng Xu.
\newblock Double threshold digraphs.
\newblock In {\em {MFCS} 2018}, volume 117 of {\em LIPIcs}, pages 69:1--69:12,
  2018.
\newblock \href {https://doi.org/10.4230/LIPIcs.MFCS.2018.69}
  {\path{doi:10.4230/LIPIcs.MFCS.2018.69}}.

\bibitem{HammerM85}
Peter~L. Hammer and Nadimpalli V.~R. Mahadev.
\newblock Bithreshold graphs.
\newblock {\em SIAM J. Algebraic Discrete Methods}, 6(3):497--506, 1985.
\newblock \href {https://doi.org/10.1137/0606049} {\path{doi:10.1137/0606049}}.

\bibitem{HeggernesHMV15}
Pinar Heggernes, Pim van~'t Hof, Daniel Meister, and Yngve Villanger.
\newblock Induced subgraph isomorphism on proper interval and bipartite
  permutation graphs.
\newblock {\em Theor. Comput. Sci.}, 562:252--269, 2015.
\newblock \href {https://doi.org/10.1016/j.tcs.2014.10.002}
  {\path{doi:10.1016/j.tcs.2014.10.002}}.

\bibitem{HellH04}
Pavol Hell and Jing Huang.
\newblock Interval bigraphs and circular arc graphs.
\newblock {\em Journal of Graph Theory}, 46(4):313--327, 2004.
\newblock \href {https://doi.org/10.1002/jgt.20006}
  {\path{doi:10.1002/jgt.20006}}.

\bibitem{JamisonS21}
Robert~E. Jamison and Alan~P. Sprague.
\newblock Double-threshold permutation graphs.
\newblock {\em Journal of Algebraic Combinatorics}.
\newblock In press.
\newblock \href {https://doi.org/10.1007/s10801-021-01029-7}
  {\path{doi:10.1007/s10801-021-01029-7}}.

\bibitem{JamisonS20}
Robert~E. Jamison and Alan~P. Sprague.
\newblock Multithreshold graphs.
\newblock {\em Journal of Graph Theory}, 94(4):518--530, 2020.
\newblock \href {https://doi.org/10.1002/jgt.22541}
  {\path{doi:10.1002/jgt.22541}}.

\bibitem{KearneyMP03}
Paul~E. Kearney, J.~Ian Munro, and Derek Phillips.
\newblock Efficient generation of uniform samples from phylogenetic trees.
\newblock In {\em {WABI} 2003}, volume 2812 of {\em Lecture Notes in Computer
  Science}, pages 177--189. Springer, 2003.
\newblock \href {https://doi.org/10.1007/978-3-540-39763-2_14}
  {\path{doi:10.1007/978-3-540-39763-2_14}}.

\bibitem{Ma93}
Tze-Heng Ma.
\newblock On the threshold dimension 2 graphs.
\newblock Technical report, Institute of Information Science, Nankang, Taipei,
  Taiwan, 1993.

\bibitem{MahadevP1995}
Nadimpalli V.~R. Mahadev and Uri~N. Peled.
\newblock {\em Threshold Graphs and Related Topics}.
\newblock North Holland, 1995.

\bibitem{McConnellS99}
Ross~M. McConnell and Jeremy~P. Spinrad.
\newblock Modular decomposition and transitive orientation.
\newblock {\em Discrete Mathematics}, 201(1-3):189--241, 1999.
\newblock \href {https://doi.org/10.1016/S0012-365X(98)00319-7}
  {\path{doi:10.1016/S0012-365X(98)00319-7}}.

\bibitem{MonmaRT88}
Clyde~L. Monma, Bruce~A. Reed, and William~T. Trotter.
\newblock Threshold tolerance graphs.
\newblock {\em Journal of Graph Theory}, 12(3):343--362, 1988.
\newblock \href {https://doi.org/10.1002/jgt.3190120307}
  {\path{doi:10.1002/jgt.3190120307}}.

\bibitem{Puleo20}
Gregory~J. Puleo.
\newblock Some results on multithreshold graphs.
\newblock {\em Graphs Comb.}, 36(3):913--919, 2020.
\newblock \href {https://doi.org/10.1007/s00373-020-02168-7}
  {\path{doi:10.1007/s00373-020-02168-7}}.

\bibitem{RaschleS95}
Thomas Raschle and Klaus Simon.
\newblock Recognition of graphs with threshold dimension two.
\newblock In {\em STOC 1995}, pages 650--661. {ACM}, 1995.
\newblock \href {https://doi.org/10.1145/225058.225283}
  {\path{doi:10.1145/225058.225283}}.

\bibitem{RavanmehrPBM18}
Vida Ravanmehr, Gregory~J. Puleo, Sadegh Bolouki, and Olgica Milenkovic.
\newblock Paired threshold graphs.
\newblock {\em Discrete Applied Mathematics}, 250:291--308, 2018.
\newblock \href {https://doi.org/10.1016/j.dam.2018.05.008}
  {\path{doi:10.1016/j.dam.2018.05.008}}.

\bibitem{SenS94}
Malay~K. Sen and Barun~K. Sanyal.
\newblock Indifference digraphs: {A} generalization of indifference graphs and
  semiorders.
\newblock {\em {SIAM} J. Discrete Math.}, 7(2):157--165, 1994.
\newblock \href {https://doi.org/10.1137/S0895480190177145}
  {\path{doi:10.1137/S0895480190177145}}.

\bibitem{SpinradBS87}
Jeremy~P. Spinrad, Andreas Brandst{\"{a}}dt, and Lorna Stewart.
\newblock Bipartite permutation graphs.
\newblock {\em Discrete Applied Mathematics}, 18(3):279--292, 1987.
\newblock \href {https://doi.org/10.1016/S0166-218X(87)80003-3}
  {\path{doi:10.1016/S0166-218X(87)80003-3}}.

\bibitem{Sprague95}
Alan~P. Sprague.
\newblock Recognition of bipartite permutation graphs.
\newblock {\em Congr. Num.}, 112:151--161, 1995.

\bibitem{SterbiniR98}
Andrea Sterbini and Thomas Raschle.
\newblock An {$O(n^3)$} time algorithm for recognizing threshold dimension 2
  graphs.
\newblock {\em Inf. Process. Lett.}, 67(5):255--259, 1998.
\newblock \href {https://doi.org/10.1016/S0020-0190(98)00112-4}
  {\path{doi:10.1016/S0020-0190(98)00112-4}}.

\bibitem{West98}
Douglas~B. West.
\newblock Short proofs for interval digraphs.
\newblock {\em Discrete Mathematics}, 178(1-3):287--292, 1998.
\newblock \href {https://doi.org/10.1016/S0012-365X(97)81840-7}
  {\path{doi:10.1016/S0012-365X(97)81840-7}}.

\bibitem{XiaoN20}
Mingyu Xiao and Hiroshi Nagamochi.
\newblock Characterizing star-{PCGs}.
\newblock {\em Algorithmica}, 82(10):3066--3090, 2020.
\newblock \href {https://doi.org/10.1007/s00453-020-00712-8}
  {\path{doi:10.1007/s00453-020-00712-8}}.

\bibitem{YanCC96}
Jing{-}Ho Yan, Jer{-}Jeong Chen, and Gerard~J. Chang.
\newblock Quasi-threshold graphs.
\newblock {\em Discrete Applied Mathematics}, 69(3):247--255, 1996.
\newblock \href {https://doi.org/10.1016/0166-218X(96)00094-7}
  {\path{doi:10.1016/0166-218X(96)00094-7}}.

\end{thebibliography}

\end{document}